\def \EE{\mathbb E}
\def \RR{\mathbb R}
\newtheorem{thm}{Theorem}[section]
\newtheorem{lem}[thm]{Lemma}
\theoremstyle{remark}
\theoremstyle{definition}
\begin{document}

\title{New Results on the Capacity of the Gaussian Cognitive Interference Channel}

\author{
\IEEEauthorblockN{Stefano Rini, Daniela Tuninetti, and Natasha Devroye\\}
\medskip
\IEEEauthorblockA {%
Department of Electrical and Computer Engineering\\
University of Illinois at Chicago\\
Email: \{srini2, danielat, devroye\}@uic.edu}
\thanks{The work of S. Rini and D. Tuninetti was partially funded by NSF under award 0643954.}
}

\maketitle
\begin{abstract}
The capacity of the two-user Gaussian cognitive interference channel,
a variation of the classical interference channel where
one of the transmitters has knowledge of both messages,
is known in several parameter regimes but remains unknown in general.
In this paper, we consider the following achievable scheme:
the cognitive transmitter pre-codes its message against the
interference created at its intended receiver by the primary
user, and the cognitive receiver only decodes its intended message,
similar to the optimal scheme for ``weak interference'';
the primary decoder decodes both messages, similar to the
optimal scheme for ``very strong interference''.
Although the cognitive message is pre-coded against the
primary message, by decoding it, the primary receiver
obtains information about its own message, thereby improving
its rate.
We show:
(1) that this proposed scheme achieves capacity
in what we term the ``primary decodes cognitive'' regime, i.e.,
a subset of the ``strong interference'' regime that is
not included in the ``very strong interference'' regime
for which capacity was known;
(2) that this scheme is within one bit/s/Hz, or a factor
two, of capacity for a much larger set of parameters, thus
improving the best known constant gap result;
(3) we provide insights into the trade-off between
interference pre-coding at the cognitive encoder
and interference decoding at the primary receiver based on the
analysis of the approximate capacity results.
\end{abstract}

\section{Introduction}
\label{sec:intro}

The {\it cognitive interference channel}~\cite{devroye_IEEE} is a well studied
channel model inspired by the newfound abilities of cognitive radio technology and
its potential impact on spectral efficiency in wireless networks~\cite{goldsmith_survey}.
This channel model consists of a two-user interference channel, where
one transmitter-receiver pair is referred to as the {\em primary} user
and the other as the {\em cognitive} user. As opposed to the classical
interference channel, the cognitive transmitter has full
non-causal knowledge of both messages, idealizing the cognitive user's ability
to detect transmissions taking place in the network.\footnote{This channel
has also been termed the {\em interference channel with unidirectional
cooperation}~\cite{MaricGoldsmithKramerShamai07Eu} or the {\em interference channel with degraded
message set}~\cite{WuDegradedMessageSet}.}
The primary transmitter has knowledge of its own message only.


\smallskip
\noindent
{\bf Past work.}
The cognitive interference channel was first posed in an information theoretic
framework in~\cite{devroye_IEEE}, where an achievable rate region
(for discrete memoryless channels) and a broadcast-channel-based outer bound
(for Gaussian channels only) were proposed.
The first capacity results were determined in~\cite{JovicicViswanath06,WuDegradedMessageSet}
for channels with ``weak interference'' at the primary receiver.
In this regime, the cognitive transmitter
pre-codes its message against the interference created at its receiver by the
primary message, while the primary receiver treats the interference from
the cognitive transmitter as noise.
In contrast, in the regime of ``very strong interference'' at the primary receiver,
capacity is achieved
by simple superposition coding at the cognitive transmitter and
by having both receivers decode both messages~\cite{MaricYatesKramer07}.
For the general cognitive interference channel, the largest known achievable rate region is found
in~\cite{RTDjournal1} (where the inclusion of all previously proposed inner bounds is formally shown), while the tightest known outer bound is in~\cite{MaricGoldsmithKramerShamai07Eu} (based on a broadcast-channel-based technique from~\cite{NairGamal06}).
While in general the capacity region remains unknown, in Gaussian noise,
we recently demonstrated an achievable rate region which lies to
within 1.87~bits/s/Hz from capacity~\cite{rini2010capacity}.

\smallskip
\noindent
{\bf Contributions.}
In this paper, we:
\noindent
\\1) Derive a new capacity result by showing the achievability of an outer bound originally presented
     in \cite{MaricGoldsmithKramerShamai07Eu} for a subset of the ``strong interference'' regime.
\\2) Prove capacity to within one bit/s/Hz or to within a factor two in a large parameter region.
\\3) Highlight how non-perfect (partial) interference ``pre-cancellation'' at the cognitive user
     boosts the rate of the primary user and give numerical examples to quantify the rate improvements.

\smallskip
\noindent
{\bf Organization.}
The rest of the paper is organized as follows:
Section \ref{sec:ch.model} formally defines the channel model and
summarizes known results for the Gaussian channel;
Section \ref{sec:new results} proves the new capacity result;
Section \ref{sec:further results perfect dpc} analyzes the capacity achieving
scheme with perfect ``pre-cancellation'' of the interference and
shows two new approximate capacity results;
Section \ref{sec:further results NONperfect dpc} focuses on the achievable rates
with partial interference ``pre-cancellation''; and
Section \ref{sec:Conclusion and Future Work} concludes the paper.

\section{Channel model and known results}
\label{sec:ch.model}

A two-user InterFerence Channel (IFC) is a multi-terminal network with
two senders and two receivers.  The inputs $(X_1,X_2)$
are related to the outputs $(Y_1,Y_2)$ by a memoryless channel with
transition probability $P_{Y_1,Y_2|X_1,X_2}$.
Each transmitter $i$, $i\in\{1,2\}$, wishes
to communicate a message $W_i$ to receiver $i$.
Each message $W_i$, $i\in\{1,2\}$, is uniformly distributed on
$\{1, \ldots , 2^{N R_i}\}$, where $N$ represents the block-length
and $R_i$ the transmission rate. The two messages are independent.
In the classical IFC, the two transmitters operate independently.
Here we consider a variation of this set up by assuming that transmitter~1
(referred to as the {\em cognitive} user), in addition to its own message
also knows the message of transmitter~2 (referred to as the {\em primary} user).
A rate pair $(R_1,R_2)$ is achievable for a CIFC if there exists
a sequence of encoding functions
\begin{align*}
X_1^N = X_1^N(W_1, W_2), \quad
X_2^N = X_2^N(W_2),
\end{align*}
and a sequence of decoding  functions
\pp{
\widehat{W}_1  = \widehat{W}_1(Y_1^N),  \quad \widehat{W}_2  = \widehat{W}_2(Y_2^N),
}
such  that
\begin{align*}
\max_{i \in \{1,2 \}}\Pr\lsb  \widehat{W}_i \neq W_i  \rsb \to 0, \;\; \text{ as }N\to\infty.
\end{align*}
The capacity region is the convex closure of the set of achievable rates~\cite{ThomasCoverBook}.

\begin{figure}
\centering
\includegraphics[width=9 cm]{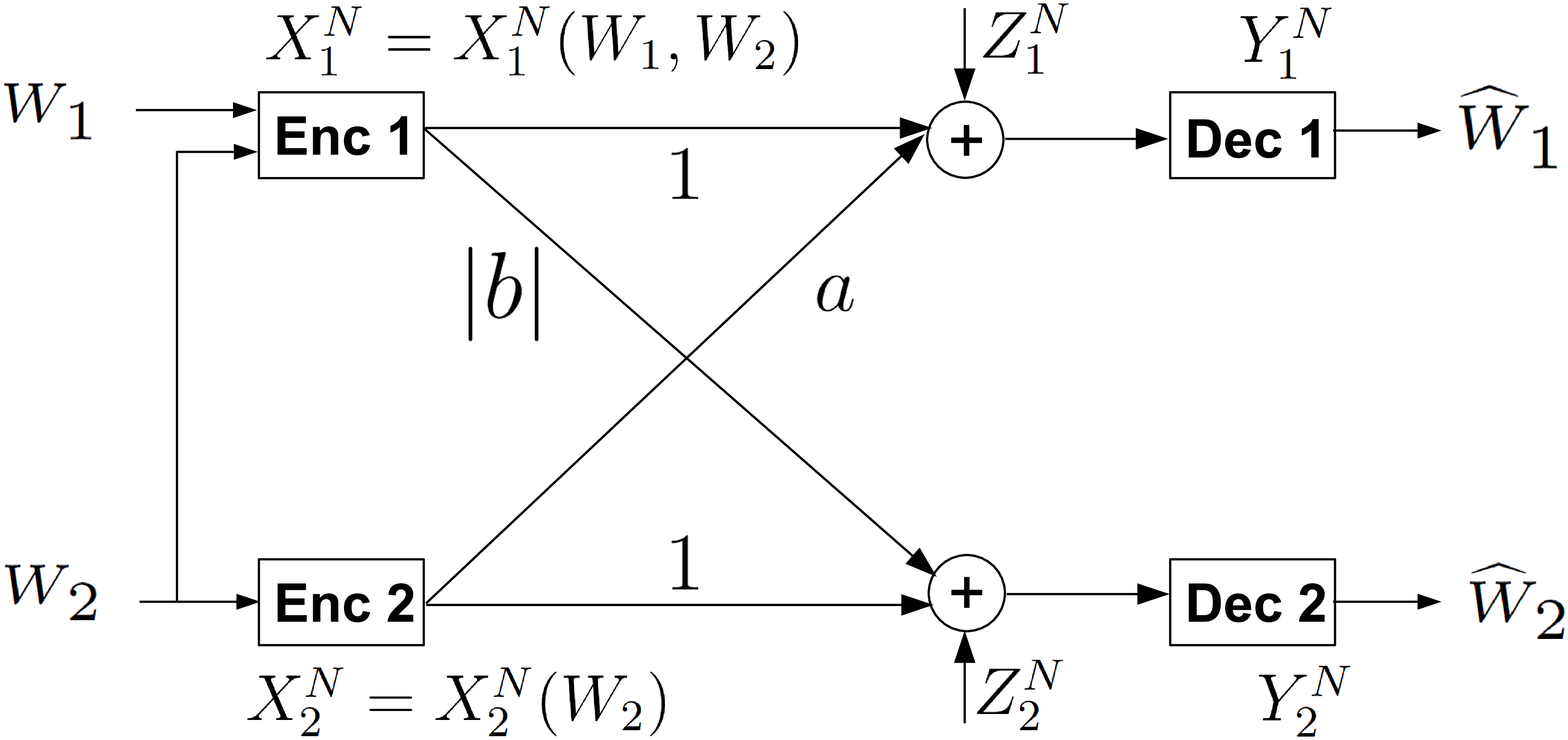}
\vspace{-.8 cm}
\caption{The G-CIFC.}
\label{fig:GeneralModel}
\end{figure}

A complex-valued Gaussian CIFC (G-CIFC) in \emph{standard form}, as
represented in Fig.~\ref{fig:GeneralModel}, has outputs:
\begin{align*}
Y_1 &= X_1+ a X_2+ Z_1,\\
Y_2 &= |b| X_1+ X_2+ Z_2,
\end{align*}
for $a,b \in \Cbb$, where the inputs are subject to the power constraint
$\E[|X_i|^2] \leq P_i$, $i \in \{1,2\}$,
and where the additive proper-complex Gaussian noises $Z_1$ and $Z_2$
have zero mean and unit variance. Without loss of generality~\cite{riniPrelim},
the phase of $b$ can be taken to be zero.

When $a=|b|^{-1}$, one channel output is a degraded version of the other output.
In  particular,
when $|b|\leq 1$, $Y_2$ is a degraded version of $Y_1$, and
when $|b|>    1$, $Y_1$ is a degraded version of $Y_2$.
We refer to this particular channel as the {\em degraded G-CIFC}.

The capacity region of the G-IFC is not known in general.
We next summarize known capacity results for the G-CIFC.
We define $\Ccal(x) := \log(1+x)$.

\begin{thm}{ \textbf{Outer bound of~\cite{rini2010capacity}.}}
\label{thm:unifying outer bound}
The capacity region of the G-CIFC is included in:
{
\begin{subequations}
\begin{align}
R_1     & \leq \Ccal\lb \al P_1 \rb,
\label{eq:outer bound CIFC R1}\\
R_2     & \leq \Ccal\lb |b|^2 P_1 +P_2 + 2  \sqrt{\alb |b|^2  P_1 P_2} \rb,
\label{eq:outer bound CIFC R2}\\
R_1+R_2 & \leq \Ccal\lb |b|^2 P_1 +P_2 + 2  \sqrt{\alb |b|^2  P_1 P_2} \rb\nonumber\\&\quad
+\log \left( \f{1+\max\{1,|b|^2\}\al P_1}{1+\al|b|^2P_1} \right),
\label{eq:outer bound CIFC R1+R2}
\end{align}
\label{eq:outer bound CIFC}
\end{subequations}
}
taken over the union of all $\al\in[0,1]$.
\end{thm}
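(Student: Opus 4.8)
The plan is to combine Fano's inequality with a genie-aided converse, exploiting the fact that the cognitive transmitter knows $W_2$ in order to hand $W_2$ to receiver~1 as side information. Fano gives $N R_1 \le I(W_1;Y_1^N\mid W_2) + N\epsilon_N$ and $N R_2 \le I(W_2;Y_2^N) + N\epsilon_N$, and the three constraints will emerge from three different ways of bounding these two mutual informations and their sum. Throughout I would introduce a single parameter $\alpha\in[0,1]$ as the normalized average conditional variance of $X_1$ given $W_2$, so that $\alpha P_1$ is the ``own'' power of $X_1$ while $\bar\alpha = 1-\alpha$ measures the squared input correlation the cognitive encoder induces; the purpose of the union over $\alpha$ is precisely that one value, extracted from the actual code via an averaged correlation coefficient and a concavity/Jensen step, must satisfy all three inequalities at once.

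I expect the two single-rate bounds to be routine. For \eqref{eq:outer bound CIFC R1}, conditioning on $W_2$ fixes $X_2^N$, so $I(W_1;Y_1^N\mid W_2) = h(Y_1^N\mid W_2) - h(Z_1^N) = h(X_1^N + Z_1^N\mid W_2) - h(Z_1^N)$ once the known term $aX_2^N$ is dropped; since the conditional variance of each $X_1$ symbol is at most $\alpha P_1$, Gaussian maximization of entropy gives $\Ccal(\alpha P_1)$. For \eqref{eq:outer bound CIFC R2}, data processing yields $I(W_2;Y_2^N)\le I(X_1^N,X_2^N;Y_2^N) = h(Y_2^N) - h(Z_2^N)$, and bounding the received signal power at $Y_2$ by $|b|^2 P_1 + P_2 + 2\sqrt{\bar\alpha|b|^2 P_1 P_2}$, where the cross term is controlled through the input correlation $|\rho|\le\sqrt{\bar\alpha}$, produces the stated logarithm.

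The sum-rate bound \eqref{eq:outer bound CIFC R1+R2} is the crux, and I would split it by the value of $|b|$. Adding the two Fano bounds and using the identities above gives $N(R_1+R_2)\le \bigl[h(Y_2^N) - h(Z_2^N)\bigr] + \bigl[h(X_1^N + Z_1^N\mid W_2) - h(|b|X_1^N + Z_2^N\mid W_2)\bigr] + N\epsilon_N$, where the first bracket reproduces the $\Ccal$ term of \eqref{eq:outer bound CIFC R2} and the second must be shown to be at most $N\log\frac{1+\max\{1,|b|^2\}\alpha P_1}{1+\alpha|b|^2 P_1}$. For $|b|\le1$ I would write $Z_2 = |b|Z_1 + Z'$ with $Z'$ an independent refinement of variance $1-|b|^2$, so that $|b|X_1^N + Z_2^N \stackrel{d}{=} |b|(X_1^N + Z_1^N) + Z'^N$; applying the conditional entropy power inequality to lower-bound $h(|b|X_1^N + Z_2^N\mid W_2)$ by a function of $h(X_1^N + Z_1^N\mid W_2)$, then inserting the Gaussian maximum $h(X_1^N + Z_1^N\mid W_2)\le N\,\Ccal(\alpha P_1) + h(Z_1^N)$ and checking that the resulting bound is monotone in that entropy, yields exactly $N\log\frac{1+\alpha P_1}{1+\alpha|b|^2 P_1}$. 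For $|b|>1$ the same substitution reverses into $X_1^N + Z_1^N \stackrel{d}{=} \frac{1}{|b|}(|b|X_1^N + Z_2^N) + Z_0^N$ with $\operatorname{var}(Z_0)=1-|b|^{-2}$, exhibiting the Markov chain $X_1^N\to(|b|X_1^N + Z_2^N)\to(X_1^N + Z_1^N)$ given $W_2$; hence $Y_1$ is a degraded version of $Y_2$, so $I(W_1;Y_1^N\mid W_2)\le I(W_1;Y_2^N\mid W_2)$ and the sum combines into $I(W_1;Y_2^N\mid W_2)+I(W_2;Y_2^N)=I(X_1^N,X_2^N;Y_2^N)$, collapsing the log term to $0$ and matching $\max\{1,|b|^2\}=|b|^2$. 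The main obstacle I anticipate is the entropy-power step for $|b|\le1$, namely verifying that a Gaussian $X_1$ is extremal for the entropy difference and that the bound behaves correctly in $\alpha$, together with the bookkeeping that forces a single $\alpha$ to serve all three inequalities after single-letterization.
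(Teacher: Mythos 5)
Your proof is essentially correct, but note that this paper contains no proof of Th.~\ref{thm:unifying outer bound} at all: the result is imported verbatim from \cite{rini2010capacity}, so the comparison must be with that reference. Your route --- genie $W_2$ at receiver~1, defining $\alpha P_1$ as the average conditional variance of $X_1$ given $W_2$ so that one common $\alpha$ (with $\frac{1}{N}\sum_i|\mathbb{E}[X_{1,i}X_{2,i}^*]|\le\sqrt{\bar\alpha P_1P_2}$ by Cauchy--Schwarz and Jensen) serves all three inequalities, then the conditional entropy power inequality for $|b|\le 1$ and the degradedness/data-processing argument for $|b|>1$ --- is sound, and the two branches reproduce exactly the $\max\{1,|b|^2\}$ dichotomy: the $|b|\le1$ branch is in effect the Wu--Vishwanath--Arapostathis/Jovicic--Viswanath converse and the $|b|>1$ branch the strong-interference converse, with the monotonicity check you flagged going through because $t\mapsto\log t-\log(|b|^2t+c)$ is increasing for $c>0$. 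The cited reference reaches the same region by a different decomposition: instead of a case split, receiver~1 is handed the stronger genie $(W_2,Y_2^N)$, which gives the unified single-letter sum-rate bound $R_1+R_2\le I(Y_2;X_1,X_2)+I(Y_1;X_1|X_2,Y_2)$ valid for any memoryless CIFC; in the Gaussian evaluation the term $I(Y_1;X_1|X_2,Y_2)$ is then minimized over the correlation of $(Z_1,Z_2)$ --- legitimate since capacity depends only on the noise marginals --- and the couplings $Z_2=|b|Z_1+Z'$ (for $|b|\le1$) and $Z_1=Z_2/|b|+Z_0$ (for $|b|>1$) collapse it to $\log\bigl(\frac{1+\max\{1,|b|^2\}\alpha P_1}{1+\alpha|b|^2P_1}\bigr)$ in one shot. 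So the noise-coupling device you invoke is also the engine of the reference's proof, but packaged inside a single-letter outer bound that avoids the EPI and extends beyond the Gaussian case, whereas your argument is more elementary and self-contained per case. Both are valid derivations of the stated region.
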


\begin{thm}{\textbf{``Weak interference'' capacity of~\cite[Lemma 3.6]{WuDegradedMessageSet} and~\cite[Th. 4.1]{JovicicViswanath06}.}}
If
\ea{
|b|\leq 1, \;\; \text{ (``weak interference'' regime/condition)}
\label{eq:weak interference condition}
}
the outer bound of Th.\ref{thm:unifying outer bound} is tight.
\end{thm}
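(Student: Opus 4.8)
The plan is to prove achievability of the region in Theorem~\ref{thm:unifying outer bound} (whose converse is already in hand) by analyzing the classical ``weak interference'' scheme: the cognitive transmitter splits its power between coherently reinforcing the primary message and protecting its own message by dirty-paper coding (DPC), and the primary receiver treats the residual cognitive signal as noise. Concretely, I would set $X_2=\sqrt{P_2}\,S_2$ with $S_2$ a unit-power codeword for $W_2$, and let
\begin{align}
X_1=\sqrt{\tfrac{\alb P_1}{P_2}}\,X_2+X_1',
\nonumber
\end{align}
where the first summand is a deterministic beamforming copy of $X_2$ (of power $\alb P_1$) and $X_1'$ is an independent Gaussian codeword of power $\al P_1$ carrying $W_1$; the two terms use the full budget $\al P_1+\alb P_1=P_1$. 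Because transmitter~1 knows $W_2$, the entire aggregate $(a+\sqrt{\alb P_1/P_2})\,X_2$ corrupting $Y_1$ is non-causally known at the encoder, so $X_1'$ is generated as a DPC codeword against it.

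For the rates, I would first invoke Costa's dirty-paper theorem: with the matched inflation factor, DPC lets receiver~1 decode $W_1$ as though the known interference were absent, against only the unit-variance noise $Z_1$, giving $R_1=\Ccal(\al P_1)$ and meeting \eqref{eq:outer bound CIFC R1} with equality. At receiver~2 the direct term $X_2$ and the beamformed term combine coherently (both gains being real and positive after the phase normalization of $b$), producing useful amplitude $\sqrt{P_2}+|b|\sqrt{\alb P_1}$, while $X_1'$ contributes Gaussian interference of power $\al|b|^2P_1$; treating the latter as noise yields
\begin{align}
R_2=\Ccal\!\lb\frac{|b|^2\alb P_1+P_2+2\sqrt{\alb|b|^2P_1P_2}}{1+\al|b|^2P_1}\rb .
\nonumber
\end{align}

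The crux is to show that sweeping $\al\in[0,1]$ reproduces the outer-bound region exactly. I would verify the key algebraic identity for the sum rate: adding $\Ccal(\al P_1)$ to the $R_2$ expression, the factor $1+\al|b|^2P_1$ cancels and the numerator collapses via $\al+\alb=1$ to $1+|b|^2P_1+P_2+2\sqrt{\alb|b|^2P_1P_2}$, so that
\begin{align}
R_1+R_2=\Ccal\lb|b|^2P_1+P_2+2\sqrt{\alb|b|^2P_1P_2}\rb+\Ccal(\al P_1)-\Ccal(\al|b|^2P_1).
\nonumber
\end{align}
This matches \eqref{eq:outer bound CIFC R1+R2} precisely because the hypothesis \eqref{eq:weak interference condition} forces $\max\{1,|b|^2\}=1$, turning its trailing logarithm into $\Ccal(\al P_1)-\Ccal(\al|b|^2P_1)$. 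The endpoint $\al=0$ (all power beamformed) attains \eqref{eq:outer bound CIFC R2}, and increasing $\al$ trades $R_2$ for $R_1$ monotonically, so the traced corner points sweep out the dominant face of the region.

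The main obstacle is not the coding theorem---$R_1=\Ccal(\al P_1)$ from DPC is standard---but the exact matching of the inner and outer descriptions over the full sweep, i.e.\ checking that for every $\al$ the achieved point lies on the boundary and that together they fill the outer-bound region (a convexity/monotonicity argument). The condition $|b|\le1$ is exactly what makes this work: it is the step where $\max\{1,|b|^2\}=1$ renders the DPC-plus-treat-as-noise scheme sum-rate optimal. For $|b|>1$ the identity fails and the same scheme no longer meets the bound, which is the gap the rest of the paper sets out to close.
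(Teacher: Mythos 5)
Your construction and rate algebra are sound, and they reproduce exactly the achievability argument of the cited references (note that the paper itself gives no proof of this statement; it imports it from~\cite{WuDegradedMessageSet} and~\cite{JovicicViswanath06}): beamforming plus DPC at the cognitive encoder and interference-treated-as-noise at the primary decoder yield, for each $\al\in[0,1]$, the corner point with $R_1=\Ccal(\al P_1)$ and $R_2=\Ccal\big(|b|^2P_1+P_2+2\sqrt{\alb|b|^2P_1P_2}\big)-\log\big(1+\al|b|^2P_1\big)$, and your sum-rate identity correctly shows that this point satisfies \reff{eq:outer bound CIFC R1} and \reff{eq:outer bound CIFC R1+R2} with equality once $\max\{1,|b|^2\}=1$.

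The genuine gap is the final step, which you yourself flag as the ``main obstacle'' and then settle by assertion: the claim that sweeping $\al$ makes these corner points ``fill the outer-bound region.'' For $|b|<1$ that set equality is false, so no convexity/monotonicity argument can close it. The outer bound of Th.~\ref{thm:unifying outer bound} is a union over $\al$ of pentagons, whereas your schemes achieve a union of rectangles; covering the face of the pentagon at $\al$ by rectangles at $\al'\leq\al$ would require the quantity $g(\al):=\big(1+|b|^2P_1+P_2+2\sqrt{\alb|b|^2P_1P_2}\big)\,\frac{1+\al P_1}{1+\al|b|^2P_1}$ (the exponentiated sum-rate bound) to be non-increasing in $\al$, and for $|b|<1$ it is not. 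Concretely, take $P_1=P_2=10$, $|b|^2=1/2$: the pentagon with $\al=1/2$ contains the point $(R_1,R_2)=\big(\log(12/7),\log 26\big)$, at which \reff{eq:outer bound CIFC R2} and \reff{eq:outer bound CIFC R1+R2} are simultaneously tight, yet every one of your rectangles whose $R_1$-side reaches $\log(12/7)$ (i.e. $\al\geq 1/14$) has $R_2$-side at most $\approx\log 21.8<\log 26$, and time-sharing does not help since the corner-point curve is concave for these parameters; indeed, by the converse of~\cite{JovicicViswanath06} this point is not achievable at all, so the union of pentagons strictly exceeds the capacity region. The theorem can therefore only be read in the corner-point sense: for every $\al$, the point where \reff{eq:outer bound CIFC R1} and \reff{eq:outer bound CIFC R1+R2} hold with equality is achievable, so that the capacity region (the union of your rectangles, with the converse supplied by the references rather than by Th.~\ref{thm:unifying outer bound}) is exactly characterized; this is also the operative sense in which ``achieving the outer bound'' is used in the proof of Th.~\ref{thm:new capacity result}. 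Your computations establish precisely that and should stop there. Note finally that your closing remark inverts the logic: it is for $|b|>1$, where \reff{eq:outer bound CIFC R2} coincides with the sum-rate bound and the MISO term is monotone in $\al$, that the sweeping argument actually does go through.
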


\begin{thm}{ \textbf{``Very strong interference'' capacity of~\cite[Thm. 6]{Maric_strong_interference} extended to complex-valued channels.}}
When
\ea{
(|a|^2-1) P_2-(|b|^2-1)P_1 -2 \big|a-|b|\big| \sqrt{P_1 P_2} \geq  0,&
\nonumber \\ \text{and} \ |b|> 1 \
\text{(``very strong interference'' regime/condition)}&
\label{eq:very strong interference condition}
}
the outer bound of Th.\ref{thm:unifying outer bound} is tight.
\label{thm: Very strong interference capacity}
\end{thm}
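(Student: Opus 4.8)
The plan is to exhibit the classical very-strong-interference construction---superposition coding at the cognitive transmitter with joint decoding of both messages at both receivers---and to verify that, under the stated condition, every Pareto-optimal point of the outer bound of Theorem~\ref{thm:unifying outer bound} is achievable. First I would fix $\al\in[0,1]$, let the primary input be $X_2\sim\mathcal{CN}(0,P_2)$ carrying $W_2$, and set the cognitive input to $X_1 = V + \gamma\, X_2$, where $V\sim\mathcal{CN}(0,\alb P_1)$ is independent of $X_2$ and carries $W_1$, and $\gamma=\sqrt{\al P_1/P_2}$ is chosen real and positive so that $\E[|X_1|^2]=P_1$. Both receivers jointly decode $(W_1,W_2)$, so the achievable region is the intersection $\mathcal{M}_1\cap\mathcal{M}_2$ of the two Gaussian multiple-access regions seen at $Y_1$ and $Y_2$ with $V$ and $X_2$ as the two virtual users; I would write out their six rate constraints explicitly.

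Because $|b|>1$ makes the logarithmic correction in \eqref{eq:outer bound CIFC R1+R2} vanish, the sum-rate bound collapses onto \eqref{eq:outer bound CIFC R2}, so the boundary of the outer bound is traced by the corner points $P^*(\al)=\lb \Ccal(\alb P_1),\ \Ccal\lb|b|^2P_1+P_2+2\sqrt{\al|b|^2P_1P_2}\rb-\Ccal(\alb P_1)\rb$. The second step is to check $P^*(\al)\in\mathcal{M}_2$: since $\gamma$ is phase-aligned with the unit primary gain and with $|b|$ (real by the standing normalization), the receiver-$2$ sum rate $I(V,X_2;Y_2)$ equals exactly $\Ccal\lb|b|^2P_1+P_2+2\sqrt{\al|b|^2P_1P_2}\rb$, placing $P^*$ on the sum-rate face of $\mathcal{M}_2$; the companion single-user bound $R_1\le I(V;Y_2|X_2)=\Ccal(|b|^2\alb P_1)$ holds because $|b|>1$ gives $\Ccal(\alb P_1)\le\Ccal(|b|^2\alb P_1)$.

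The decisive step is $P^*(\al)\in\mathcal{M}_1$. Since $R_1=\Ccal(\alb P_1)=I(V;Y_1|X_2)$ is at its maximum, the only binding requirement is $R_1+R_2\le I(V,X_2;Y_1)$, equivalently $R_2^*\le I(X_2;Y_1)=\Ccal\lb|\gamma+a|^2P_2/(1+\alb P_1)\rb$. Substituting $\gamma=\sqrt{\al P_1/P_2}$ and clearing logarithms, this reduces to $(|a|^2-1)P_2-(|b|^2-1)P_1-2\lb|b|-\mathrm{Re}(a)\rb\sqrt{\al P_1P_2}\ge 0$. I would then show that the theorem's hypothesis forces this for every $\al\in[0,1]$: the elementary bound $|a-|b||\ge|b|-\mathrm{Re}(a)$ together with $\sqrt{P_1P_2}\ge\sqrt{\al P_1P_2}$ shows that the stated condition---which corresponds to the extremal value of the beamforming split---dominates the required inequality uniformly in $\al$. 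Achievability of all corner points $P^*(\al)$, together with convexity of the capacity region, then matches the outer bound and proves the claim.

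The main obstacle I anticipate is this uniformity-in-$\al$ argument combined with the complex cross-gain $a$: one must confirm that the single scalar hypothesis is genuinely sufficient along the entire boundary and that replacing $|b|-\mathrm{Re}(a)$ by $|a-|b||$ only strengthens, rather than weakens, the condition. A secondary verification is that none of the remaining four MAC constraints (the two single-user bounds at $Y_1$ and the off-corner bounds of $\mathcal{M}_2$) becomes active, which should follow from the monotonicity of the mutual-information expressions but deserves an explicit check.
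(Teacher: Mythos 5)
Your skeleton---superposition at the cognitive transmitter, joint decoding of both messages at both receivers, and a uniformity-in-$\al$ verification---is the right one, and your final reduction (bounding $2\lb|b|-\Re\{a\}\rb\sqrt{\al P_1P_2}$ by $2\big|a-|b|\big|\sqrt{P_1P_2}$) is a correct analogue of the worst-case-correlation computation that the paper itself performs. The genuine gap is in the achievable region you start from. By generating the codebook of $V$ \emph{independently} of the codebook of $X_2$ and requiring both receivers to decode the pair, you obtain the intersection $\mathcal{M}_1\cap\mathcal{M}_2$ of two \emph{full} three-constraint MAC regions; in particular receiver~2 carries the individual constraint $R_2 \leq I(X_2;Y_2|V)$, which you defer to a ``secondary verification.'' That verification fails: the constraint can be active and violated at the very corner point you need. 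Concretely, take $P_1=10^{-4}$, $P_2=1$, $a=|b|=10$ (real) and the corner with $\gamma=0$. The condition in~\reff{eq:very strong interference condition} holds ($99 - 99\cdot 10^{-4}\geq 0$, $|b|>1$), and the outer-bound corner is $R_1=\Ccal(10^{-4})$, $R_2=\Ccal(1.01)-\Ccal(10^{-4})=\log(2.01)-\log(1.0001)>\log 2$, yet $I(X_2;Y_2|V)=\Ccal(P_2)=\log 2$. Hence this corner lies strictly outside your $\mathcal{M}_2$, and your proof cannot reach it for channels of this type, even though the theorem asserts capacity there.

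The missing idea is that the cognitive codeword must be \emph{superimposed} on (i.e., generated conditionally on) the primary codeword, as in the cited scheme of Mari\'c et al.: the component carrying $W_1$ is drawn afresh for every pair $(w_1,w_2)$. Then the error event ``$w_2$ wrong, $w_1$ right'' at either receiver involves a codeword pair that is entirely independent of the received signal, so it is governed by the sum-rate constraint $R_1+R_2\leq I(X_1,X_2;Y_i)$, and the individual constraint $R_2\leq I(X_2;Y_i|X_1)$ disappears. Each receiver then imposes only two constraints, $R_1\leq I(X_1;Y_i|X_2)$ and $R_1+R_2\leq I(X_1,X_2;Y_i)$, which for $|b|>1$ collapse onto the outer bound exactly when $I(X_1,X_2;Y_1)\geq I(X_1,X_2;Y_2)$. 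This is precisely how the paper proceeds: it invokes the superposition result of the cited reference, which is capacity achieving whenever $I(Y_1;X_1,X_2)\geq I(Y_2;X_1,X_2)$ for \emph{every} input distribution, and then reduces that requirement to the closed-form condition~\reff{eq:very strong interference condition} by writing $\Re\{a^*\rho\}-|b|\Re\{\rho\}=\big|a-|b|\big|\,|\rho|\cos(\phi)$ and taking the worst case $|\rho|\cos(\phi)=-1$. Once you replace the independent-codebook MAC intersection by conditional (superposition) codebooks, the rest of your argument---including your uniform-in-$\al$ inequality---goes through.
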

\begin{IEEEproof}
For complex-valued G-CIFC with $|b|>1$, the outer bound of Th.\ref{thm:unifying outer bound}
is achievable by the superposition-only scheme of~\cite{Maric_strong_interference} if
$I(Y_1; X_1,X_2) \geq I(Y_2; X_1, X_2)$ for all input distributions, that is, if
\begin{align}
&\EE[|Y_1|^2]-\EE[|Y_2|^2] = (|a|^2-1)P_2 -(|b|^2-1)P_1+ \nonumber
\\&+2 \sqrt{P_1 P_2}  (\Re\{a^* \rho\} - |b| \Re\{\rho \})\geq 0,
\quad \forall|\rho| \leq 1.
\label{eq:strong interference condition proof}
\end{align}
Let $\rho=|\rho| \eu^{\jj \phi_\rho}$
and $a=|a|  \eu^{\jj \phi_a}$. We have
\begin{align*}
\Re\{a^* \rho\} - |b| \Re\{\rho \}
  &= |\rho||a|\cos(\phi_\rho-\phi_a)
   - |\rho||b|\cos(\phi_\rho)
\\&= \big|a-|b|\big|\cdot |\rho|\cos(\phi),
\end{align*}
for some angle $\phi$. The condition in~\reff{eq:strong interference condition proof}
is thus verified for all $|\rho| \cos(\phi)\in[-1,+1]$ if it is verified for $|\rho|\cos(\phi) =-1$.
\end{IEEEproof}

\begin{thm}{ \textbf{Constant gap of \cite{rini2010capacity}. }}
\label{thm: constant gap}
The gap between the outer bound in Th.\ref{thm:unifying outer bound}
and the achievable scheme of \cite{rini2010capacity} is at most 1.87 bits/s/Hz.
\end{thm}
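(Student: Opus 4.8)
The plan is to show that the inner bound of \cite{rini2010capacity} and the outer bound of Theorem~\ref{thm:unifying outer bound} are separated by a constant in each of the three rate coordinates, uniformly over $(a,b,P_1,P_2)$. Both regions are unions of polytopes: the outer bound over the single parameter $\alpha\in[0,1]$, and the inner bound over its power split, its dirty-paper-coding (DPC) inflation factor, and the beamforming correlation coefficient. Since each polytope is described by constraints of the same structural form (on $R_1$, $R_2$, and $R_1+R_2$), it suffices to fix an outer-bound face generated by some $\alpha^\star$, make one matched choice of the inner-bound parameters as explicit functions of $\alpha^\star$, and then bound, constraint by constraint, the difference between the outer right-hand side and the achievable right-hand side. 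If each such difference is at most $g$, then translating the outer region inward by $g$ in each coordinate places it inside the inner region: an outer point $(R_1,R_2)$ shifted to $(R_1-g,R_2-g)$ satisfies every inner constraint, including the sum constraint with room to spare, which is exactly the claimed gap.

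First I would fix the inner-bound power split so that the cognitive user devotes power $\alpha^\star P_1$ to the DPC-coded (private) stream and $(1-\alpha^\star) P_1$ to the component beamformed coherently with $X_2$, matching the interpretation of $\alpha^\star$ in the outer bound; I would set the beamforming correlation to $\sqrt{1-\alpha^\star}$ and choose the DPC inflation factor to its MMSE-optimal value for the intended receiver. With this choice the achievable $R_1$ constraint equals $\Ccal(\alpha^\star P_1)$ up to the residual DPC loss incurred because the cognitive receiver sees a slightly non-white effective noise; this loss is bounded by a universal constant via $\Ccal(x)-\Ccal(cx)\le\log(1/c)$, so the $R_1$-gap is $O(1)$ and vanishes in the regime where DPC is exactly optimal.

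Next I would treat the $R_2$ and sum-rate constraints, which are the delicate ones. The outer terms $\Ccal(|b|^2P_1+P_2+2\sqrt{(1-\alpha^\star)|b|^2P_1P_2})$ represent the total coherent received power at receiver~2, whereas in the scheme receiver~2 either decodes both streams or treats the private stream as noise. I would split the outer logarithm using the elementary bounds $\Ccal(x+y)\le\Ccal(x)+\Ccal(y)$ and $\Ccal(2x)\le 1+\Ccal(x)$, and bound the cross term $2\sqrt{(1-\alpha^\star)|b|^2P_1P_2}$ against the achievable coherent-combining gain via AM--GM, so that every term in the outer bound is matched by a corresponding achievable term plus a constant. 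The two cases $|b|\le 1$ and $|b|>1$ have to be handled separately, since the identity of the degraded output and the way the residual interference $|b|^2\alpha^\star P_1$ at receiver~2 is absorbed differ between them.

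The hard part will be making a single parameter choice that controls all three gaps simultaneously and then optimizing the resulting bound: the sum-rate difference couples the DPC residual at receiver~1 with the treat-as-noise penalty at receiver~2, and the worst case over $(\alpha^\star,a,b,P_1,P_2)$ and over the two interference regimes is what produces the explicit numerical constant. I expect the final step to reduce to maximizing a scalar expression in $\alpha^\star$ and the channel gains over a compact set, whose supremum is the stated $1.87$~bits/s/Hz; confirming that this value is attained rather than merely bounded, and that no other corner of the polytopes gives a larger gap, is the main bookkeeping obstacle.
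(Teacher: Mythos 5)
There is a genuine gap here, and it is the central one. Note first that this theorem is not proved in the present paper at all---it is imported from \cite{rini2010capacity}---but the paper's own account of that proof (Section~\ref{sec:further results NONperfect dpc}, quoting \cite[Sec.~IV.B]{rini2010capacity}) identifies the key device: in the part of the strong-interference regime where the one-bit result of Th.~\ref{thm:banana shape gap} fails, the DPC parameter is set to $\lambda=\lambda_{\mathrm{Costa}\,1}+\epsilon$ with $\epsilon\neq 0$ chosen as a function of the transmit powers, i.e., the interference is deliberately only \emph{partially} pre-cancelled, trading an $O(|\epsilon|^2)$ loss in $R_1$ for an $O(|\epsilon|)$ gain in $R_2$. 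Your proposal instead fixes the DPC inflation factor at its MMSE-optimal (Costa) value, $\lambda=\lambda_{\mathrm{Costa}\,1}$, and this choice provably cannot yield \emph{any} uniform constant gap. Near the degraded line $a|b|=1$ with $|b|>1$ one has $\lambda_{\mathrm{Costa}\,1}\approx\lambda_{\mathrm{Costa}\,2}$, so perfect pre-cancellation pushes the achievable $R_2$ bound \eqref{eq:scheme X2 U1c R2} to essentially its minimum; concretely, for $a=1/|b|$ with $|b|^2=P_1=P_2=P\to\infty$, the scheme with $\lambda=\lambda_{\mathrm{Costa}\,1}$ gives $R_2 = O(1)$ at every fixed $\alpha$ bounded away from zero (and time sharing with the MISO point does not repair this), whereas the outer bound of Th.~\ref{thm:unifying outer bound} contains points with $R_1\approx\log P$ and $R_2\approx\log P$ simultaneously; the gap thus grows like $\log P$. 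This is exactly why Th.~\ref{thm:banana shape gap} carries the restriction \eqref{eq:capacity condition 1} and why the $\epsilon$-perturbation is indispensable to the $1.87$-bit result you are trying to prove.

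Second, even in the regimes where your parameter choice does work, the argument never produces the number that constitutes the theorem: the worst-case optimization yielding $1.87$ bits/s/Hz is deferred as ``the main bookkeeping obstacle.'' Since the statement being proved \emph{is} that this worst case is at most $1.87$, a proof that stops before evaluating it is incomplete even as an outline---and with $\lambda=\lambda_{\mathrm{Costa}\,1}$ the quantity you would be optimizing is infinite, so no amount of bookkeeping can recover the claim. The outer shell of your plan (match the inner-bound parameters to each outer-bound $\alpha$, bound the three constraint differences, shift the outer region inward) is a sound way to organize such gap proofs, but the substance---moving $\lambda$ off the Costa point by a carefully scaled $\epsilon$ and explicitly evaluating the resulting residual---is missing, and it is precisely what \cite{rini2010capacity} supplies.
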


Fig.~\ref{fig:capacityPlotOld} shows the region where capacity is
known (shaded region) and where only constant gap results are
available (white region) in the $(a,|b|)$-plane
for $a \in \Rbb$ and $P_1=P_2$.

\begin{figure}
\centering
\includegraphics[width=9 cm]{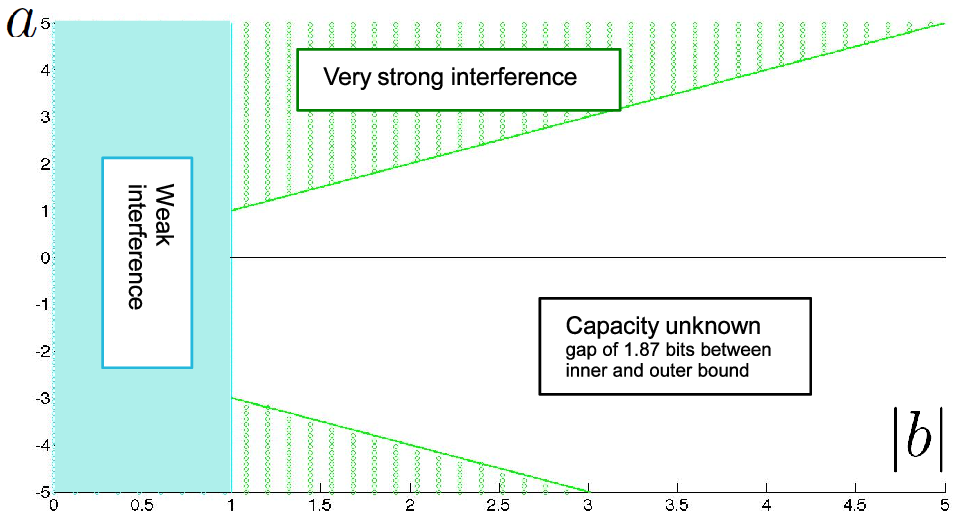}\\
\vspace{-1.2 cm}
\caption{A representation of the known capacity results
for the real-valued G-CIFC with $P_1=P_2$ and $(a,|b|)\in [-5,5]\times [0,5]$.
The ``weak interference'' regime has solid light blue fill, while
the ``very strong interference'' regime has dotted green lines.}
\label{fig:capacityPlotOld}
\end{figure}

\section{New capacity results}
\label{sec:new results}

We now show the achievability of the
outer bound of Th.\ref{thm:unifying outer bound}
for a subset of $|b|> 1$  not included in the ``very strong interference''
regime of Th.\ref{thm: Very strong interference capacity}.
We first present an achievable scheme and then derive the conditions under which it is capacity achieving.

\subsection{Achievable scheme}
\label{sec:scheme E}

Consider the achievable scheme of \cite{rini2009state} where only $R_{1c}$ and $R_{2pa}$
are non zero\footnote{This scheme can also be obtained as a special case of the region of \cite[Th.~1]{MaricGoldsmithKramerShamai07Eu} and \cite{cao2008interference}.}, that is:
\begin{itemize}
 \item the primary message  $W_2$ is encoded in $X_2^N$,
 \item the cognitive message $W_1$ is encoded in $U_{1c}^N$,
 \item $U_{1c}^N$ is Dirty Paper Coded (DPCed) against the interference
       created by $X_2^N$ at the cognitive receiver,
 \item $X_1^N$ is a function of $X_2^N$ and $U_{1c}^N$,
 \item the cognitive receiver decodes $U_{1c}^N$, and,
 \item the primary receiver jointly decodes $U_{1c}^N$ and $X_2^N$.
\end{itemize}
This scheme  achieves the following region:
\eas{
R_1 & \leq & I(Y_1; U_{1c})-I(U_{1c};X_2),
\label{eq:scheme X2 U1c DMC R1} \\
R_2 & \leq & I(Y_2, U_{1c};X_2) = I(Y_2; U_{1c},X_2) +\nonumber\\
   && -\Big(I(Y_2; U_{1c})-I(U_{1c};X_2)\Big),
\label{eq:scheme X2 U1c DMC R2}\\
R_1+R_2 & \leq & I(Y_2; U_{1c},X_2),
\label{eq:scheme X2 U1c DMC R1+R2}
}{\label{eq:scheme X2 U1c DMC}}
for any distribution $p_{U_{1c} X_1 X_2}$.

For the G-CIFC, we restrict our attention to the case:
\eas{
X_{1c} & \sim & \Ncal_{\Cbb}(0,\al P_1), \ \al \in [0,1], \\
X_2    & \sim & \Ncal_{\Cbb}(0,P_2)  \ , \  X_{1c}  \perp  X_2, \\
X_1    & =    & X_{1c}+ \sqrt{\f {\alb P_1}{P_2}} X_2,  \\
U_{1c} & =    & X_{1c}+ \la X_2, \ \la \in \Cbb,\\
Y_1 &=& X_{1c}+\lb a+\sqrt{\f {\alb P_1}{P_2}} \rb X_2 +Z_1, \\
Y_2 &=& |b| X_{1c}+\lb 1+\sqrt{\f {\alb |b|^2 P_1}{P_2}} \rb X_2 +Z_2.
}{\label{eq:primary decodes cognitive RV assignment}}
The parameter $\al$ denotes the fraction of power that encoder~1 employs
to transmit its own message versus the power to broadcast $X_2$.
For $\al=0$, transmitter~1 uses  all its power to broadcast $X_2$
as in a virtual Multiple Input Single Output (MISO) channel.
When $\al=1$, transmitter~1 utilizes all its power  to transmit $X_{1c}$.
With the choices in~\reff{eq:primary decodes cognitive RV assignment},
the region in \reff{eq:scheme X2 U1c DMC} becomes:
\eas{
R_1 &\leq & f\left(a+\sqrt{\f {\alb P_1}{P_2}},  1; \lambda\right),
\label{eq:scheme X2 U1c R1}\\
R_2 &\leq & \Ccal (P_2+|b|^2P_1+2 \sqrt{ \alb |b|^2 P_1 P_2})+\nonumber\\
       &&- f\left(\frac{1}{|b|}+\sqrt{\f {\alb P_1}{P_2}},\frac{1}{|b|^2}; \lambda\right),
\label{eq:scheme X2 U1c R2}\\
R_1+R_2  &\leq & \Ccal (P_2+|b|^2P_1+2\sqrt{\alb |b|^2 P_1 P_2}),
\label{eq:scheme X2 U1c R1+R2}
}{\label{eq:scheme X2 U1c}}
for
\begin{align*}
&    f(h,\sigma^2;\lambda)
  \triangleq
   I(X_{1c}+ h X_2 + \sigma Z_1; U_{1c}) - I( U_{1c}; X_2)
\\&= \log \lb  \frac{\sgs+\al P_1}{\sgs
+\f{\al P_1   |h|^2P_2 }{\al P_1+|h|^2 P_2+\sgs}\left|\f{\la}{\la_{\rm Costa}(h,\sigma^2)}-1\right|^2}
\rb,
\end{align*}
where
\[
\la_{\rm Costa}(h,\sigma^2)
\triangleq  \f {\al P_1} {\al P_1 + \sigma^2} \ h.
\]
The parameter $\la$ controls how much of the interference created
by $X_2$ is ``pre-canceled'' and
\begin{align*}
f(h,\sigma^2;\la)
&\leq f\big(h,\sigma^2;\la_{\rm Costa}(h,\sigma^2)\big)
= \Ccal\lb \al P_1/\sigma^2\rb,
\end{align*}
i.e.,  $\la=\la_{\rm Costa}(h,\sigma^2)$
achieves the ``interference free'' rate as shown by Costa in~\cite{costa1983writing}.

\subsection{The pre-coding parameter $\lambda$}
\label{sec:discussion on lambda}

As mentioned before, the parameter $\la$ controls the amount of interference
``pre-cancellation'' achievable with DPC at transmitter~1.  With
$\la=0$, no DPC is performed at transmitter~1 and the interference
due to $X_2$ is treated as noise.
On the other hand, with
\[
\la
= \la_{\rm Costa}\lb a+\sqrt{\f {\alb P_1}{P_2}},\ 1 \rb
\triangleq   \la_{\rm Costa \ 1},
\]
the interference due to $X_2$ at receiver~1 is completely ``pre-canceled'',
thus achieving the maximum possible rate $R_1$.
Different values of $\la$ are not usually investigated because,
as long as the interference is a nuisance (i.e., no node in the
network has information to extract from the interference), the best
is to completely ``pre-cancel'' it by using $\la = \la_{\rm Costa}(h,\sgs)$.

However, $\la$ influences not only the rate $R_1$ in \reff{eq:scheme X2 U1c R1},
but also the rate $R_2$ in \reff{eq:scheme X2 U1c R2}.  An interesting question
is whether $\la \not= \la_{\rm Costa \ 1}$, although it does not achieve
the largest possible $R_1$, would improve the
achievable region by sufficiently boosting the rate $R_2$.
We comment on this question later on.
At this point we make the following observation:
$R_1$ is a concave function in $\la$, symmetric around $\la=\la_{\rm Costa \ 1}$
and with a global maximum at $\la=\la_{\rm Costa \ 1}$, while
$R_2$ is a convex function in $\la$, symmetric around $\la=\la_{\rm Costa \ 2}$
and with a global minimum at  $\la=\la_{\rm Costa \ 2}$, where
\[
\la_{\rm Costa}\lb\frac{1}{|b|}+\sqrt{\f {\alb P_1}{P_2}},\ \frac{1}{|b|^2} \rb
\triangleq   \la_{\rm Costa \ 2}.
\]
Fig.~\ref{fig:MaricChannelAchievabilityV3} shows $R_1$
in \reff{eq:scheme X2 U1c R1} and $R_2$ in \reff{eq:scheme X2 U1c R2}
as a function of $\la\in\RR$,
for $P_1=P_2=6$, $b=\sqrt{2}$, $a=\sqrt{0.3}$ and $\al=0.5$.

\begin{figure}
\centering
\includegraphics[width=7 cm]{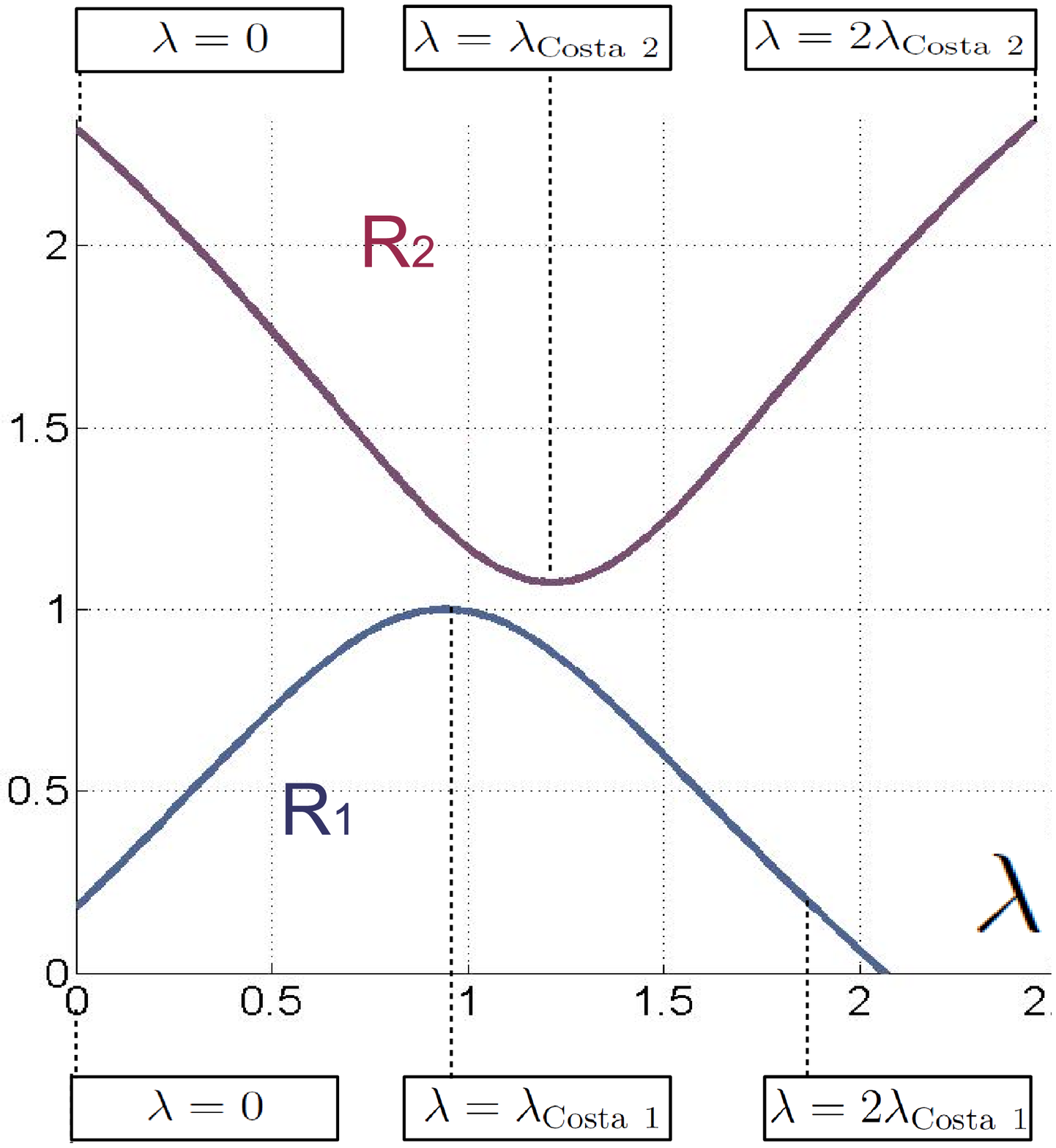}\\
\vspace{-1.8 cm}
\caption{The bound for $R_1$ in~\reff{eq:scheme X2 U1c R1} (bottom) and
the bound for $R_2$ in~\reff{eq:scheme X2 U1c R2} (top) as a function of $\la\in\RR$,
for $P_1=P_2=6$, $b=\sqrt{2}$ and $a=\sqrt{0.3}$ and $\al=0.5$.}
\label{fig:MaricChannelAchievabilityV3}
\end{figure}

\subsection{New ``strong interference'' capacity result}
\label{sec:new capacity}

We now determine the conditions under which the outer bound of
Th.\ref{thm:unifying outer bound} can be achieved by the region in \reff{eq:scheme X2 U1c}.

\begin{thm}{\bf{The ``primary decodes cognitive'' regime.}}
\label{thm:new capacity result}
When $|b|>1$ and
{\footnotesize
\begin{subequations}
\begin{align}
&P_2 |1-a|b||^2
\geq (|b|^2-1)(1+P_1+|a|^2P_2) - P_1 P_2 \big|1-a |b| \big|^2, 
\label{eq:capacity condition 1}\\
&P_2 |1-a|b||^2
\geq (|b|^2-1)(1+P_1+|a|^2P_2+ 2 \Re\{a\}\sqrt{P_1 P_2}), 
\label{eq:capacity condition 2}
\end{align}
\label{eq:capacity condition}
\end{subequations}
}\normalsize
the outer bound of Th.\ref{thm:unifying outer bound} is tight.
\end{thm}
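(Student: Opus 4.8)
The plan is to show that, for $|b|>1$, the achievable region \reff{eq:scheme X2 U1c} can be made to coincide with the outer bound of Th.\ref{thm:unifying outer bound} under \reff{eq:capacity condition}. First I would simplify the outer bound: for $|b|>1$ one has $\max\{1,|b|^2\}=|b|^2$, so the logarithmic correction in \reff{eq:outer bound CIFC R1+R2} vanishes and the sum-rate bound collapses to $\Ccal(|b|^2 P_1+P_2+2\sqrt{\alb|b|^2 P_1P_2})$, which is exactly the achievable sum-rate \reff{eq:scheme X2 U1c R1+R2}. Since \reff{eq:outer bound CIFC R2} then equals this same expression, it is implied by the sum-rate bound together with $R_1\ge0$ and may be dropped. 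Hence, for each fixed $\al\in[0,1]$, the outer bound is the polygon cut out by \reff{eq:outer bound CIFC R1} and the sum-rate bound alone.

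Next I would evaluate \reff{eq:scheme X2 U1c} at $\la=\la_{\rm Costa\ 1}$. By the Costa identity $f(h,\sigma^2;\la_{\rm Costa}(h,\sigma^2))=\Ccal(\al P_1/\sigma^2)$ recalled after \reff{eq:scheme X2 U1c}, this makes \reff{eq:scheme X2 U1c R1} tight at $R_1=\Ccal(\al P_1)$, matching \reff{eq:outer bound CIFC R1}, while \reff{eq:scheme X2 U1c R1+R2} already matches the simplified sum-rate bound. The two polygons therefore share their $R_1$-facet and their sum-rate facet, and they coincide if and only if the one remaining achievable constraint \reff{eq:scheme X2 U1c R2} does not cut off the corner $(\Ccal(\al P_1),\,\Ccal(\mathrm{sum})-\Ccal(\al P_1))$. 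With $f_2 := f(\frac{1}{|b|}+\sqrt{\alb P_1/P_2},\,\frac{1}{|b|^2};\,\la_{\rm Costa\ 1})$ this is the single scalar requirement $f_2\le\Ccal(\al P_1)$. Because $\mathrm{sum}(\al)$ decreases in $\al$, these corners trace the entire Pareto boundary of the outer bound as $\al$ ranges over $[0,1]$, so capacity follows exactly when $f_2\le\Ccal(\al P_1)$ holds for every $\al\in[0,1]$.

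I would then reduce $f_2\le\Ccal(\al P_1)$ to a polynomial inequality. Clearing the logarithms and canceling the common factor $\al P_1$, and substituting $t=\sqrt{\alb}$ (so that the $\sqrt{\alb P_1/P_2}$ terms entering both $\la_{\rm Costa\ 1}$ and the gain toward receiver~2 become polynomial), the requirement takes the form $\frac{P_2}{|b|^2}N(t)^2\ge(|b|^2-1)\,L(t)\,Q(t)$, where $N$ is quadratic in $t$, $Q=1+\al P_1$ is quadratic, and $L$ is affine and positive on $[0,1]$. Evaluating at the two extreme power allocations gives precisely the two hypotheses: $t=0$ (that is $\al=1$, all power on the private codeword) yields \reff{eq:capacity condition 1}, and $t=1$ (that is $\al=0$, all power spent beamforming $X_2$) yields \reff{eq:capacity condition 2}.

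The hard part will be upgrading these two endpoint inequalities to the full statement for all $\al\in[0,1]$. The difference $\Phi(t)=\frac{P_2}{|b|^2}N(t)^2-(|b|^2-1)L(t)Q(t)$ is a quartic with positive leading coefficient, and $\Phi(0),\Phi(1)\ge0$ does not in general prevent $\Phi$ from dipping negative on $(0,1)$; $\Phi$ need not be convex there either. One must therefore use the specific structure of $N$, $L$, $Q$ — essentially that $N(t)$ stays on one side of the envelope $\pm\sqrt{(|b|^2/P_2)(|b|^2-1)L(t)Q(t)}$ throughout $[0,1]$, so that the worst case of $f_2-\Ccal(\al P_1)$ is attained at an endpoint — to conclude that \reff{eq:capacity condition 1} and \reff{eq:capacity condition 2} are sufficient. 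Once this is established, taking the union over $\al\in[0,1]$ of the matched polygons recovers the full outer bound region, proving it is the capacity region.
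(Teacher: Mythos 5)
Your setup faithfully retraces the paper's own route: for $|b|>1$ the logarithmic term in \reff{eq:outer bound CIFC R1+R2} vanishes, \reff{eq:outer bound CIFC R2} is redundant, and with $\la=\la_{\rm Costa \ 1}$ the bounds \reff{eq:scheme X2 U1c R1} and \reff{eq:scheme X2 U1c R1+R2} coincide with \reff{eq:outer bound CIFC R1} and \reff{eq:outer bound CIFC R1+R2}, so everything reduces to the corner condition $f\big(\tfrac{1}{|b|}+\sqrt{(1-\alpha)P_1/P_2},\tfrac{1}{|b|^2};\la_{\rm Costa \ 1}\big)\leq\log(1+\alpha P_1)$ for every $\alpha\in[0,1]$, which is exactly \reff{eq:proof capacity 1}. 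But the proof is not complete. You clear the logarithms, obtain a quartic inequality $\Phi(t)\geq 0$ in $t=\sqrt{1-\alpha}$, check it only at $t\in\{0,1\}$ (whence \reff{eq:capacity condition 1} and \reff{eq:capacity condition 2}), and then explicitly defer the passage from the two endpoints to the whole interval to an unspecified ``envelope'' property of $N$, $L$, $Q$. That passage is the entire content of the theorem: as you yourself observe, a quartic that is nonnegative at $t=0$ and $t=1$ can be negative in between, so what you have proved is that the two hypotheses are the right endpoint checks, not that they suffice.

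The missing idea --- the paper's ``after some algebra'' step leading to \reff{eq:inequality capacity proof} --- is that $\Phi$ is not a generic quartic: it factors as a strictly positive term times a function that is \emph{concave} in $t$. Concretely, the cleared-logarithm inequality equals (up to positive normalization) $(1+\alpha|b|^2P_1)\,Q(\alpha)\geq 0$, with $Q(\alpha)$ exactly as in \reff{eq:inequality capacity proof}; since $1+\alpha|b|^2P_1>0$, the requirement is $Q\geq0$ on $[0,1]$. Written in the variable $t$, $Q=P_2|1-a|b||^2\big((1-t^2)P_1+1\big)-(|b|^2-1)\big(1+P_1+|a|^2P_2+2\Re\{a\}\sqrt{P_1P_2}\,t\big)$ is a parabola with nonpositive leading coefficient $-P_2|1-a|b||^2P_1$, hence concave, hence bounded below on $[0,1]$ by the chord through its endpoint values. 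Therefore $Q(t{=}0)\geq0$ (condition \reff{eq:capacity condition 1}) and $Q(t{=}1)\geq0$ (condition \reff{eq:capacity condition 2}) imply $Q\geq0$ for all $\alpha\in[0,1]$. This factorization-plus-concavity argument is what your proposal needs in place of the conjectured envelope property; without it, the step from the endpoint inequalities to the full statement remains unproven.
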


The ``primary decodes cognitive'' regime, illustrated in Fig.~\ref{fig:New capacity results}
in the $(a,|b|)$-plane for $a \in \Rbb$ and $P_1=P_2=10$, covers parts of the ``strong interference''
regime where capacity was known to within 1.87 bits/s/Hz only. It also shows that the scheme
in~\reff{eq:scheme X2 U1c} is capacity achieving for part of the ``very strong interference'' region,
thus providing an alternative capacity achieving scheme to superposition coding~\cite{Maric_strong_interference}.

\begin{figure}
\centering
\includegraphics[width=10 cm ]{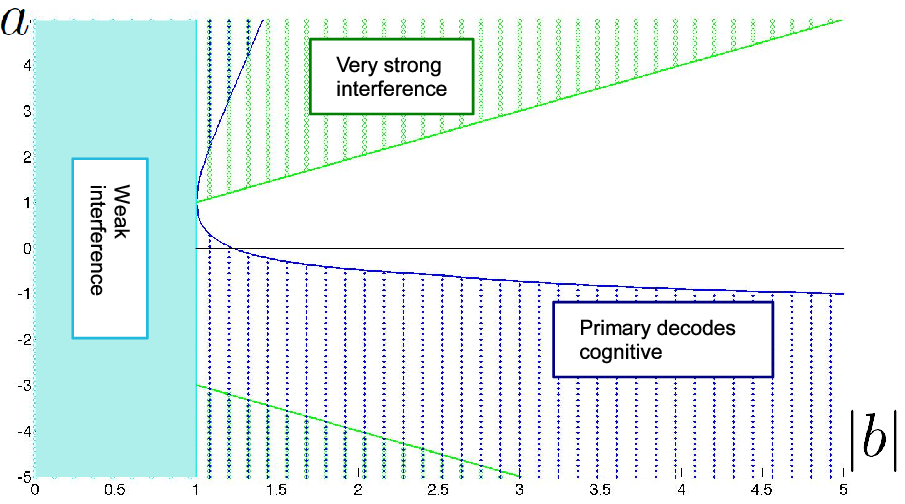}
\vspace{-1.2 cm}
\caption{A representation of the capacity result of Th.\ref{thm:new capacity result}
for a real-valued G-CIFC with $P_1=P_2=10$ and $(a,|b|)\in [-5,5]\times [0,5]$.
The ``weak interference'' regime has solid light blue fill,
the ``very strong interference'' regime has dotted green lines, while
the ``primary decodes cognitive'' region has dotted navy blue lines.}
\label{fig:New capacity results}
 \vspace{-.6 cm }
\end{figure}



%


%



\begin{proof}
The achievable scheme in \reff{eq:scheme X2 U1c} for $|b|>1$
and $\la=\la_{\rm Costa \ 1}$ achieves
\reff{eq:outer bound CIFC R1}=\reff{eq:scheme X2 U1c R1} and
\reff{eq:outer bound CIFC R1+R2}=\reff{eq:scheme X2 U1c R1+R2}
(and \reff{eq:outer bound CIFC R2} is redundant).
Therefore the outer bound of Th.\ref{thm:unifying outer bound}
is achievable when
(\reff{eq:scheme X2 U1c R1}+\reff{eq:scheme X2 U1c R2})$\geq$\reff{eq:outer bound CIFC R2},
that is when 
\begin{align}
&\Ccal(\al P_1) = f\Big(a+\sqrt{\f {\alb P_1}{P_2}},  1; \ \la_{\rm Costa \ 1}\Big)
\nonumber\\
&\geq
f\Big(\frac{1}{|b|}+\sqrt{\f {\alb P_1}{P_2}},\frac{1}{|b|^2}; \ \la_{\rm Costa \ 1}\Big),
\quad \forall \al\in[0,1].
\label{eq:proof capacity 1}
\end{align}
After some algebra, the inequality \reff{eq:proof capacity 1} can be shown to be equivalent to
\begin{align}
Q(\alpha)
&\triangleq P_2 \big|1-a |b| \big|^2 (\al  P_1 +1)-(|b|^2-1)\Big(P_1+|a|^2P_2+\nonumber \\
&+2 \Re\{a\} \sqrt{\alb P_1 P_2}+1\Big) \geq 0, \quad \forall \al\in(0,1].
\label{eq:inequality capacity proof}
\end{align}
The inequality in \reff{eq:inequality capacity proof}
is verified if $Q(0)\geq 0$ (which corresponds to the condition in~\reff{eq:capacity condition 2})
and $Q(1)\geq 0$ (which corresponds to the condition in~\reff{eq:capacity condition 1}).
\end{proof}

Previous capacity results for the G-CIFC imposed conditions on the channel
parameters that lent themselves well to ``natural'' interpretations. For example,
the ``weak interference'' condition $I(Y_1; X_1|X_2) \geq I(Y_2; X_1|X_2)$
suggests that decoding $X_1$ at receiver~2, even after having decoded
the intended message in $X_2$, would constrain the rate $R_1$
too much, thus preventing it from achieving the interference-free rate
in~\reff{eq:outer bound CIFC R1}.
The ``very strong interference'' condition $I(Y_1; X_1,X_2) \geq I(Y_2;X_1,X_2)$
suggests that requiring receiver~1 to decode both messages should not prevent
achieving the maximum sum-rate at receiver~2 given by~\reff{eq:outer bound CIFC R1+R2}.
A similar intuition about the new ``primary decodes cognitive'' capacity condition
in~\reff{eq:capacity condition} does not emerge from the proof of Th.\ref{thm:new capacity result}.
For this reason we devote the next sections to analyze the achievable
scheme of Section~\ref{sec:scheme E} in more detail. We shall consider the case of
perfect interference ``pre-cancellation'' at receiver~1 (i.e.,  $\la=\la_{\rm Costa \ 1}$)
and partial (or non-perfect) interference ``pre-cancellation'' at receiver~1  (i.e.,  $\la\not=\la_{\rm Costa \ 1}$)
separately.

\section{Perfect interference ``pre-cancellation''}
\label{sec:further results perfect dpc}

We next establish approximate capacity results by comparing the achievable performance
of the scheme in Section \ref{sec:scheme E} with perfect interference ``pre-cancellation''
at the cognitive receiver (i.e.,  $\la=\la_{\rm Costa \ 1}$)
to the outer bound in Th.\ref{thm:unifying outer bound}.
In particular, we show a parameter range where capacity can be achieved within
one bit/s/Hz, or within a factor two.

\subsection{Approximate capacity results}
Assume $\la=\la_{\rm Costa \ 1}$.  For $\al=0$,
the achievable scheme in~\reff{eq:scheme X2 U1c} is optimal
and achieves the MISO point $A$
in Fig.~\ref{fig:outerBoundPlotAllerton}.  Moreover,
when the  conditions in~\reff{eq:capacity condition} are verified,
the outer bound of Th.\ref{thm:unifying outer bound} is achievable
for all $\al\in[0,1]$.
In general, however the inner and outer bounds meet for some $\al \in [0,1]$.
In particular, from the proof of Th.\ref{thm:new capacity result},
it is possible to achieve the outer bound point for $\al=1$ whenever the
condition in~\reff{eq:capacity condition 1} holds.
We use this observation to reduce the additive gap of Th.~\ref{thm: constant gap}.
In addition we also provide a multiplicative gap.
The additive bound is effective at high SNR, where the difference between inner
and outer bound is small in comparison to the magnitude of the capacity region,
while the multiplicative bound is useful at low SNR, where the ratio between
inner and outer bound is a more indicative measure of their distance.


\begin{figure}
\centering
 \includegraphics[width=9.3 cm ]{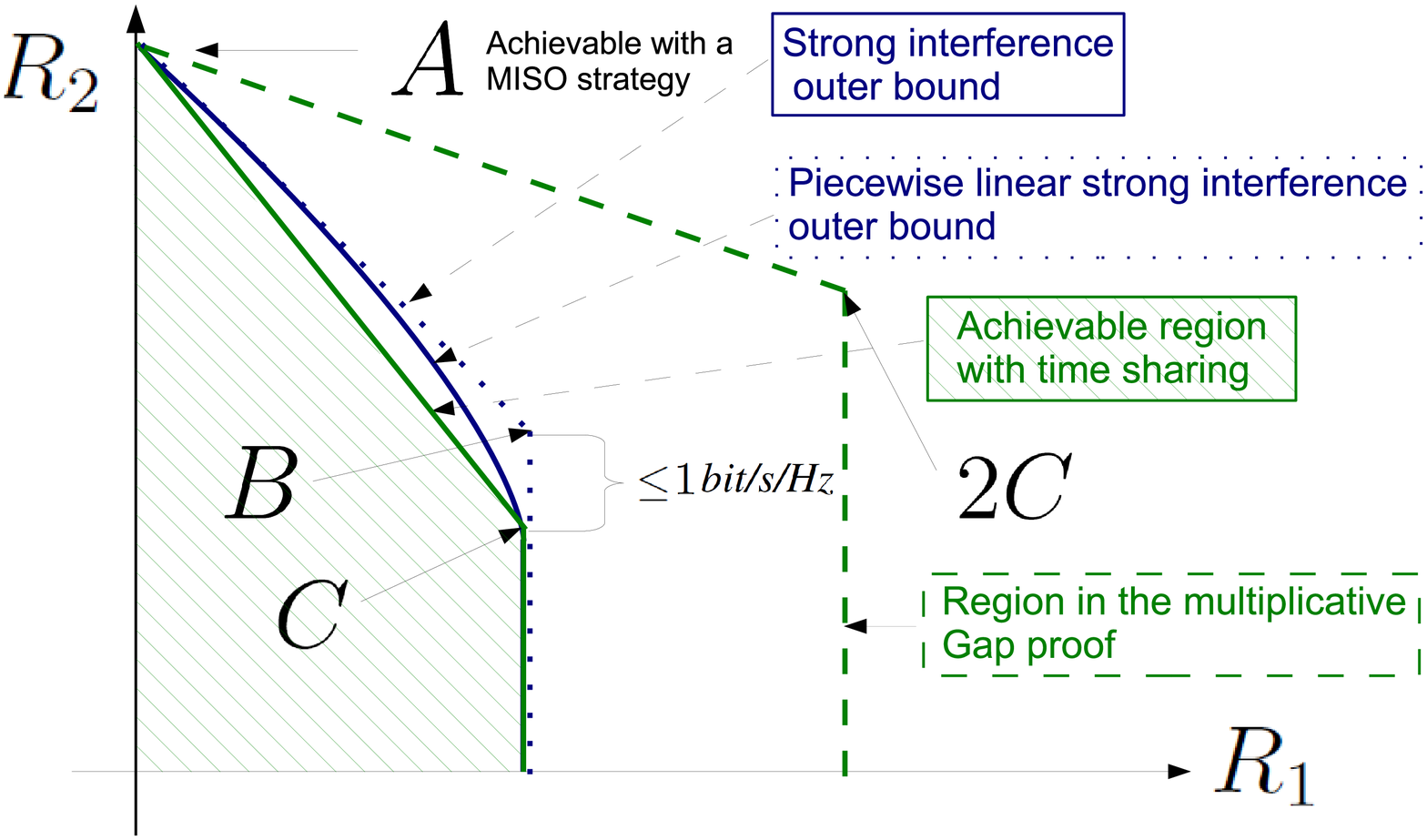}
 \vspace{-2 cm }
 \caption{A representation of the proof of Th.\ref{thm:banana shape gap}: the solid blue line is the outer bound in \reff{eq:outer bound CIFC} and the dotted blue line the outer bound in \reff{eq:piecewise SI}; the green hatched  region is the achievable  region obtained with time sharing between the points $A$ and $C$, while the green dotted line is the region in the converse of the multiplicative gap.}\label{fig:outerBoundPlotAllerton}
\end{figure}

\begin{thm}
\label{thm:banana shape gap}
When the condition in~\reff{eq:capacity condition 1} holds the outer bound of Th.\ref{thm:unifying outer bound}  is achievable
to within one bit/s/Hz or to within a factor two.
\end{thm}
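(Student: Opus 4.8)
The plan is to build the achievable region by time sharing between the two points that the scheme of Section~\ref{sec:scheme E} places \emph{exactly} on the outer boundary, and then to bound the ``banana''-shaped region left between this region and the outer bound of Th.\ref{thm:unifying outer bound}. At $\al=0$ the scheme is optimal and achieves the MISO point
\[
A=\left(0,\ \Ccal\left((|b|\sqrt{P_1}+\sqrt{P_2})^2\right)\right),
\]
while, by the proof of Th.\ref{thm:new capacity result}, condition~\reff{eq:capacity condition 1} (i.e.\ $Q(1)\ge 0$) forces the inner and outer bounds to meet at $\al=1$, at the corner
\[
C=\left(\Ccal(P_1),\ \Ccal(|b|^2P_1+P_2)-\Ccal(P_1)\right).
\]
Time sharing between $A$ and $C$ then makes the whole chord $\overline{AC}$ achievable; this is the green hatched region of Fig.\ref{fig:outerBoundPlotAllerton}.

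First I would simplify the outer bound. For $|b|>1$ the logarithmic term in~\reff{eq:outer bound CIFC R1+R2} vanishes, so~\reff{eq:outer bound CIFC R2} is redundant and the outer bound reduces to $R_1\le\Ccal(\al P_1)$ together with $R_1+R_2\le S(\al)$, where $S(\al):=\Ccal\left(|b|^2P_1+P_2+2\sqrt{\alb|b|^2P_1P_2}\right)$. Eliminating $\al$ shows that its Pareto boundary is a concave curve joining $A$ to $C$ and lying above the chord $\overline{AC}$. I would then upper bound this curve by the piecewise-linear bound~\reff{eq:piecewise SI} (dotted blue in Fig.\ref{fig:outerBoundPlotAllerton}), which shares the vertices $A$ and $C$; the additive and multiplicative gaps are then read off as the separation between $\overline{AC}$ and~\reff{eq:piecewise SI}.

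For the one-bit statement I would bound the maximal $R_2$-distance between~\reff{eq:piecewise SI} and $\overline{AC}$. Since $R_1$ matches exactly along the boundary, this separation equals the excess of $S(\al)$ over the chord; using $\Ccal(x+y)\le\Ccal(x)+\Ccal(y)$ together with the cross-term bound $2\sqrt{\alb|b|^2P_1P_2}\le|b|^2P_1+\alb P_2$, this excess reduces to a single $\log 2$ contribution, uniformly in $\al$ and in all parameters obeying~\reff{eq:capacity condition 1}. For the factor-two statement I would instead verify that one half of the outer region lies inside the time-sharing region, i.e.\ that $\tfrac12(R_1,R_2)$ is achievable for every outer-bound point $(R_1,R_2)$; this is the green dotted region of Fig.\ref{fig:outerBoundPlotAllerton}, and it follows by comparing, at $A$ and at $C$, the slope of $\overline{AC}$ with that of the outer boundary.

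The main obstacle is the uniform control of the banana's width: although $A$ and $C$ are pinned to the boundary, the outer curve bulges away from $\overline{AC}$ at intermediate $\al$, and the location of the worst-case bulge depends on $(a,|b|,P_1,P_2)$. The purpose of first relaxing to the piecewise-linear bound~\reff{eq:piecewise SI} is precisely to move this maximization to a single explicit breakpoint, where both the additive $\log 2$ bound and the multiplicative factor-two bound can be checked in closed form for every admissible parameter.
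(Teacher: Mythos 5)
Your skeleton is the same as the paper's: for $|b|>1$ relax the outer bound to the piecewise-linear region \reff{eq:piecewise SI}, note that $A$ (the $\al=0$ MISO point) is always achievable and that the outer-bound corner $C$ at $\al=1$ is achievable under \reff{eq:capacity condition 1} by the proof of Th.\ref{thm:new capacity result}, time-share along $\overline{AC}$, and bound what remains. Your additive argument is sound and is in essence the paper's computation: the worst-case vertical separation between the relaxed outer bound and the chord occurs at $R_1=\Ccal(P_1)$ and equals
\[
R_2^{(B)}-R_2^{(C)}=\Ccal\left(\frac{2\sqrt{|b|^2P_1P_2}}{1+|b|^2P_1+P_2}\right)\le\Ccal(1)=\log 2,
\]
where the last step is exactly your AM--GM cross-term bound. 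One correction, though: the Pareto vertices of \reff{eq:piecewise SI} are $A$ and $B=\left(\Ccal(P_1),\,\Ccal((\sqrt{|b|^2P_1}+\sqrt{P_2})^2)-\Ccal(P_1)\right)$, \emph{not} $A$ and $C$; the relaxed outer bound does not pass through $C$. The entire content of the additive result is that $B\neq C$ but their $R_2$-coordinates differ by less than one bit, so your phrase ``shares the vertices $A$ and $C$'' is wrong, even though your subsequent computation quietly handles the correct quantity.

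The genuine gap is in the multiplicative part. You correctly reduce the factor-two claim to showing that $\tfrac12(R_1,R_2)$ is achievable for every outer-bound point, but then assert this ``follows by comparing, at $A$ and at $C$, the slope of $\overline{AC}$ with that of the outer boundary.'' Slopes at the endpoints do not control containment of a scaled copy of a convex region: you must place the scaled worst-case vertex $\tfrac12 B$ inside the time-sharing region, and this requires an inequality between rate sums, not slopes. The paper does it by verifying the sufficient condition $2(R_1^{(C)}+R_2^{(C)})\ge R_1^{(B)}+R_2^{(B)}$, i.e. $2\,\Ccal(|b|^2P_1+P_2)\ge \Ccal((\sqrt{|b|^2P_1}+\sqrt{P_2})^2)$, which after expanding reduces to $|\sqrt{|b|^2P_1}-\sqrt{P_2}|^2+(|b|^2P_1+P_2)^2\ge 0$ and hence always holds. (Equivalently, you could note that $\tfrac12 B$ is coordinate-wise dominated by the achievable chord midpoint $\tfrac12(A+C)$, the difference of their $R_2$-coordinates being $\tfrac12\,\Ccal(|b|^2P_1+P_2)\ge 0$.) Without some such verification, the factor-two half of the theorem is unproven in your write-up.
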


\begin{IEEEproof}
By choosing the two values of $\al$ that maximize each bound
in~\reff{eq:outer bound CIFC}, we obtain that
the outer bound of Th.\ref{thm:unifying outer bound} for $|b|>1$
is contained in
\eas{
R_1 &\leq& \Ccal(P_1), \\
R_1+R_2 &\leq& \Ccal(( \sqrt{|b|^2 P_1}+\sqrt{P_2})^2),
}{\label{eq:piecewise SI}}
whose Pareto-optimal corner points (see Fig.~\ref{fig:outerBoundPlotAllerton}) are the MISO point
\[
A = \lb 0, \Ccal((\sqrt{|b|^2P_1}+\sqrt{P_2})^2 \rb
\]
and the point
\[
B= \lb \Ccal(P_1),  \Ccal(( \sqrt{|b|^2 P_1}+\sqrt{P_2})^2) - \Ccal(P_1) \rb.
\]
When condition~\reff{eq:capacity condition 1} is verified, it is possible to achieve the point of the outer bound
of Th.\ref{thm:unifying outer bound} corresponding to $\al=1$ given by (see Fig. \ref{fig:outerBoundPlotAllerton})
\[
C=\lb \Ccal(P_1),  \Ccal(|b|^2P_1+P_2)-\Ccal(P_1) \rb.
\]
Since points $B$ and $C$ have the same $R_1$-coordinate,
a constant additive gap  is readily shown by proving that
the difference of the $R_2$-coordinates is bounded.
We have:
\begin{align*}
&R_2^{(B)}-R_2^{(C)}
 =   \Ccal \lb \f{2 \sqrt{|b|^2P_1 P_2}} {1+ |b|^2P_1+ P_2} \rb
\\&\leq\Ccal \lb \sqrt{\f{P_2}{1+P_2}} \rb
  \leq\Ccal(1) = \log(2) = 1~\text{bit},
\end{align*}
where the largest gap is for $|b|^2P_1=P_2+1$.

For the multiplicative gap, it is sufficient to show that
$2(R_1^{(C)}+R_2^{(C)}) \geq R_1^{(B)}+R_2^{(B)}$ (see Fig.~\ref{fig:outerBoundPlotAllerton}),
that is
\begin{align*}
(1+|b|^2 P_1 + P_2)^2 \geq  1+|b|^2 P_1 +P_2 +2 \sqrt{|b|^2 P_1 P_2}  \ \iff\\
|\sqrt{|b|^2 P_1}-\sqrt{P_2}|^2 +(|b|^2 P_1+P_2)^2\geq 0,
\end{align*}
which is always verified.
\end{IEEEproof}

In Fig.~\ref{fig:BananaShape} we plot the region where the
condition in~\reff{eq:capacity condition 1}
holds for $a \in \Rbb$ and
for increasing values of $P_1=P_2=P$, that is, the region of $(a,|b|)$ pairs that satisfy
\ea{
P(P+1) |1-a |b||^2 \geq (|b|^2-1)(P+1+|a|^2 P).
\label{eq:capacity condition 1 P1=P2=P}
}
We notice that when $a |b| = 1$ (degraded channel),
the condition in~\reff{eq:capacity condition 1 P1=P2=P}
is satisfied by $a=|b|=1$ only.  As we increase $P$, the set of $(a,|b|)$ pairs
for which~\reff{eq:capacity condition 1 P1=P2=P}
is satisfied enlarges and converges to $a |b| \not= 1$.
In other words, for high-SNR, the scheme in Section~\ref{sec:scheme E}
with $\la=\la_{\rm Costa \ 1}$ is within one bit/s/Hz or a factor
two of capacity for all channel parameters, except
for the degraded channel.
However, for finite SNR, there exists a
region around the degraded line $a|b|=1$ for which the approximate
capacity result of Th.~\ref{thm:banana shape gap}
does not hold.
This is so because in this regime $\la_{\rm Costa \ 1} \approx \la_{\rm \ Costa \ 2}$
and thus the choice $\la=\la_{\rm Costa \ 1}$ gives a value for \reff{eq:scheme X2 U1c R2}
very close to its minimum
(see Fig.~\ref{fig:MaricChannelAchievabilityV3}) and  prevents the achievability of the point $C$.
This observation motivates the
study of imperfect, or partial, interference ``pre-cancellation''
($\la\not=\la_{\rm Costa \ 1}$) at the cognitive receiver.

\begin{figure}
\centering
 \includegraphics[width=9.5 cm ]{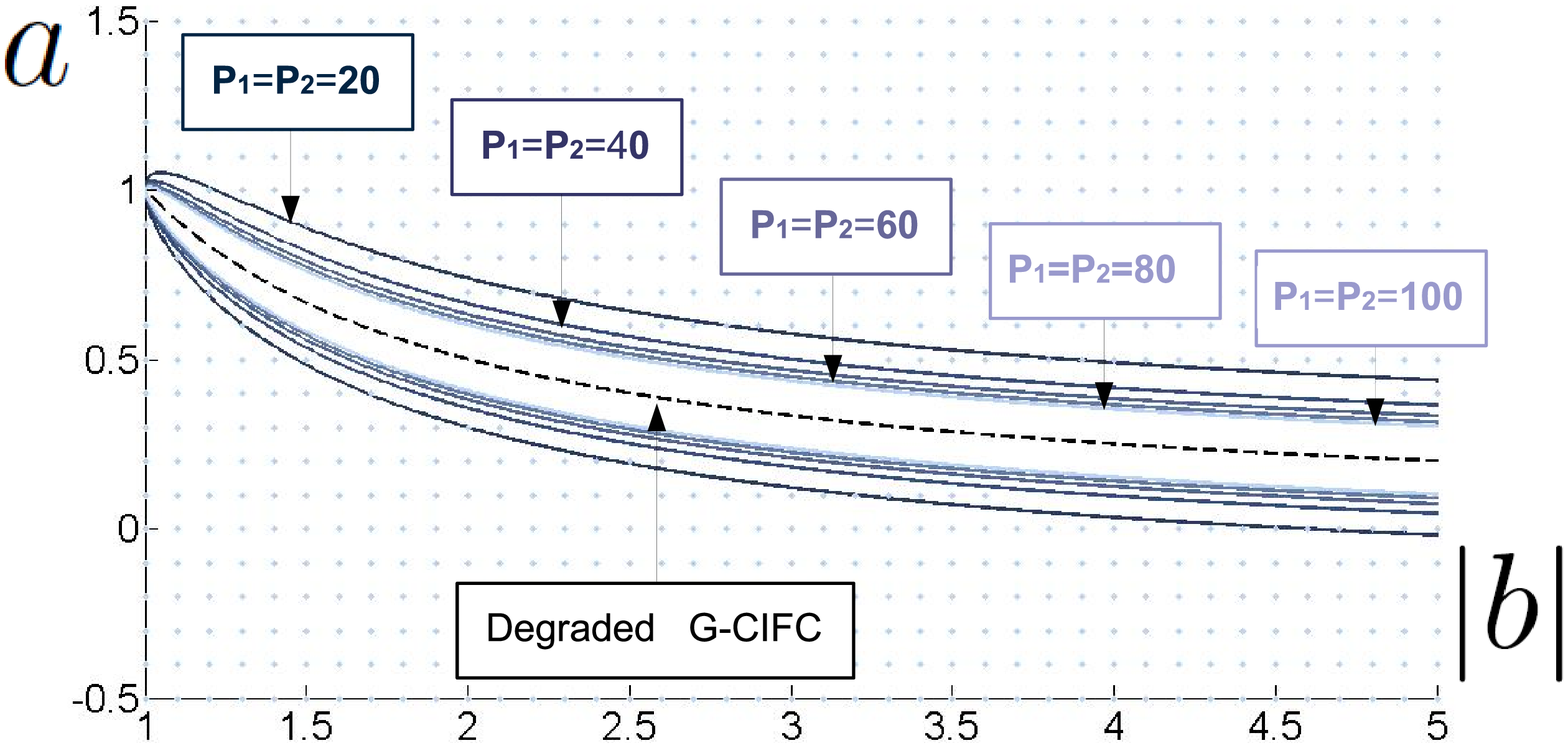}
 \vspace{-2.5 cm }
 \caption{The region where the condition in~\reff{eq:capacity condition 1 P1=P2=P}
 is verified, for different values of $P_1=P_2=P$. }
 \label{fig:BananaShape}
 \vspace{-.5 cm }
\end{figure}

\section{Partial interference ``pre-cancellation''}
\label{sec:further results NONperfect dpc}

\subsection{On the utility of partial interference ``pre-cancellation''}

In order to understand the rate improvements from partial interference ``pre-cancellation''
in the achievable scheme of Section~\ref{sec:scheme E}, we first revisit the gap result of
Th.\ref{thm: constant gap} from~\cite{rini2010capacity}.
In~\cite[Sec. IV.B]{rini2010capacity} we showed that in the subset
of the ``strong  interference'' regime where Th.~\ref{thm:banana shape gap} does not hold,
choosing $\la=\la_{\rm Costa \ 1}+\ep$, with $\ep$ an appropriate decreasing function
of the transmit powers, yields a gap to capacity
of at most 1.87 bits/s/Hz.  This choice of $\la$ can be interpreted as follows.
From~\reff{eq:scheme X2 U1c DMC R2}, we see that $U_{1c}$ plays the role of
side~information at receiver~2 when decoding $X_2$.  The DPC coefficient
$\la=\la_{\rm Costa \ 1}+\ep$ favors the decoding of $U_{1c}$ at receiver~2 while
slightly degrading the rate of user~1 in~\reff{eq:scheme X2 U1c DMC R1}.
In particular, the achievable scheme in Section \ref{sec:scheme E} for $\la=\la_{\rm Costa \ 1}+\ep$ achieves:
\begin{align*}
&R_1\leq \reff{eq:scheme X2 U1c R1} = \log \lb 1+\al P_1 \rb+
\\&-\log \lb 1 +\f{(\al P_1+1)^2 P_2 }{\al P_1 (1+P_1+|a|^2 P_2+2\Re\{a\} \sqrt{\alb P_1 P_2})}|\ep|^2 \rb,
\\
&R_2\leq \reff{eq:scheme X2 U1c R2}
= \log \lb 1 + |b|^2 P_1+P_2+ 2\sqrt{\alb |b|^2P_1 P_2}\rb+
\\&+\log\lb \f{ 1}{1+\al |b|^2P_1} +\right.\\&\left.
+\f{(\al |b|^2P_1+1) P_2 }{\al P_1 (1+|b|^2 P_1+P_2+ 2\sqrt{\alb |b|^2P_1 P_2}))}
|\ep+\Delta_\la|^2 \rb,
\end{align*}
where $\Delta_\la=\la_{\rm Costa \ 1}-\la_{\rm Costa \ 2}.$
Let  $\ep$ be small and with the same phase as $\Delta_\la$.
From the above expressions, we see that the  decrease of $R_1$
is of the order $\Ocal(|\ep|^2)$, while the increase of $R_2$
is of the order $ \Ocal(|\ep|)$.
{\em This demonstrates how one can trade residual interference at the cognitive receiver for rate improvement at the primary receiver,}
which makes use of $U_{1c}$ to decode its own message.

\subsection{On the limits of partial interference ``pre-cancellation''}

A natural question at this point is whether it is possible to achieve capacity by using
the strategy in Section \ref{sec:scheme E} with $\la = \la_{\rm Costa \ 1}+\ep$.
Unfortunately, the answer is negative:

\begin{lem}
\label{thm:no more capacity}
The achievability of the outer bound of Th.\ref{thm:unifying outer bound} using the achievable scheme of Section~\ref{sec:scheme E} can be shown only in the ``primary decodes cognitive'' regime of Th.\ref{thm:new capacity result}.
\end{lem}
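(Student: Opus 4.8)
The plan is to read the statement as a converse for the scheme: I will show that if the region \reff{eq:scheme X2 U1c}, optimized over $\al$ and $\la$, meets the outer bound of Th.\ref{thm:unifying outer bound}, then the two conditions \reff{eq:capacity condition 1}--\reff{eq:capacity condition 2} must hold. The starting point is to describe the boundary of the outer bound for $|b|>1$. Since then $\max\{1,|b|^2\}=|b|^2$, the logarithmic correction in \reff{eq:outer bound CIFC R1+R2} vanishes, so the sum-rate bound collapses onto \reff{eq:outer bound CIFC R2}; writing $S(\al)$ for this common right-hand side (which also equals the scheme sum-rate \reff{eq:scheme X2 U1c R1+R2}), for each fixed $\al$ the outer region is the triangle $R_1\le\Ccal(\al P_1)$, $R_1+R_2\le S(\al)$, with the lone $R_2$ bound redundant. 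As $\al$ increases, $\Ccal(\al P_1)$ increases while $S(\al)$ decreases, and a short optimization over the union shows that the Pareto boundary of the outer bound is traced exactly by the corner points $\big(\Ccal(\al P_1),\,S(\al)-\Ccal(\al P_1)\big)$, $\al\in[0,1]$. Hence achieving the outer bound is equivalent to achieving each of these corners with the scheme.

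The core step is to pin down the scheme parameters needed to reach the corner at a given $\al$. Reaching its $R_1$-coordinate $\Ccal(\al P_1)$ with a scheme power split $\al'$ requires $\Ccal(\al'P_1)\ge\Ccal(\al P_1)$, i.e.\ $\al'\ge\al$; but the scheme sum-rate is $S(\al')\le S(\al)$ with equality only at $\al'=\al$, so reaching the $R_1+R_2$-coordinate $S(\al)$ forces $\al'=\al$. With $\al'=\al$ fixed, $R_1\le f(a+\sqrt{\alb P_1/P_2},1;\la)\le\Ccal(\al P_1)$, and the Costa bound $f(h,\sigma^2;\la)\le\Ccal(\al P_1/\sigma^2)$ is met with equality only at $\la=\la_{\rm Costa}(h,\sigma^2)$, which for $h=a+\sqrt{\alb P_1/P_2}$ and $\sigma^2=1$ is exactly $\la=\la_{\rm Costa \ 1}$. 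This is the crux of the argument: any partial pre-cancellation $\la\neq\la_{\rm Costa \ 1}$ strictly loses $R_1$ and cannot lie on the outer-bound frontier, which is precisely why the perturbation $\la=\la_{\rm Costa \ 1}+\ep$ of the previous subsection cannot push the scheme onto capacity.

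Once $\al'=\al$ and $\la=\la_{\rm Costa \ 1}$ are forced, the corner is achievable if and only if the scheme's $R_2$ bound \reff{eq:scheme X2 U1c R2} is at least $S(\al)-\Ccal(\al P_1)$; substituting $\la=\la_{\rm Costa \ 1}$ and $\Ccal(\al P_1)=f(a+\sqrt{\alb P_1/P_2},1;\la_{\rm Costa \ 1})$, this is exactly inequality \reff{eq:proof capacity 1}, equivalently $Q(\al)\ge0$ with $Q$ as in \reff{eq:inequality capacity proof}. Demanding this for all $\al\in[0,1]$ is therefore necessary. To finish I would use the substitution $t=\sqrt{\alb}=\sqrt{1-\al}$, under which the $\al P_1$ term contributes $-P_2|1-a|b||^2 P_1\,t^2$ and the $\sqrt{\alb}$ term contributes a term linear in $t$, so $Q$ is a downward parabola, hence concave, on $t\in[0,1]$. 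A concave function is nonnegative on the whole interval iff it is nonnegative at the two endpoints $t=0$ ($\al=1$) and $t=1$ ($\al=0$), which are exactly \reff{eq:capacity condition 1} and \reff{eq:capacity condition 2}. This shows the scheme meets the outer bound precisely on the ``primary decodes cognitive'' regime of Th.\ref{thm:new capacity result}, and nowhere else.

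The main obstacle I anticipate is the geometric union argument of the first paragraph, namely establishing rigorously that the Pareto boundary of the outer bound is traced by the per-$\al$ corner points, together with the accompanying ``no cheaper route'' claim that one cannot reach a corner from a different power split $\al'\neq\al$ or from a non-Costa $\la$ without sacrificing either $R_1$ or the sum-rate. The parameter-matching and the concavity reduction are then routine, since they reuse the algebra already carried out in the proof of Th.\ref{thm:new capacity result}.
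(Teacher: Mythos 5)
Your proposal is correct and follows essentially the same approach as the paper's proof: both arguments squeeze the scheme's parameters by noting that matching the sum-rate bound (which is monotonically decreasing in $\alpha$) forces $\alpha^{\rm(in)}\leq\alpha^{\rm(out)}$, while matching the $R_1$ bound via Costa's inequality forces the reverse together with $\lambda=\lambda_{\rm Costa\ 1}$, so the only admissible assignment is exactly the one of Th.\ref{thm:new capacity result}. The extra details you supply --- the Pareto-corner description of the outer bound for $|b|>1$, and the substitution $t=\sqrt{1-\alpha}$ showing that $Q$ is concave in $t$ and hence nonnegative on the whole interval iff nonnegative at the endpoints --- only make rigorous steps that the paper asserts without elaboration (the latter in the proof of Th.\ref{thm:new capacity result}).
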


\begin{proof}
This result is shown by observing that only one choice of $\al$ and $\la$ in \reff{eq:scheme X2 U1c} achieves both the sum rate bound in \reff{eq:outer bound CIFC R1+R2} and the $R_1$ bound in \reff{eq:outer bound CIFC R1}--the choice which corresponds to the ``primary decodes cognitive'' regime of Th.\ref{thm:new capacity result}.
%
To distinguish the parameters $\al$ in the inner and outer bound, let $\al^{\rm(out)}$ be the $\al$ parameter in the outer bound in~\reff{eq:outer bound CIFC} and $\al^{\rm(in)}$ be the $\al$ parameter in~\reff{eq:scheme X2 U1c}.
%

Consider  the region \reff{eq:outer bound CIFC} for a fixed $\al^{\rm(out)}$.
%
We first notice that to achieve the sum rate outer bound in \reff{eq:outer bound CIFC R1+R2} 
we have to pick the $\al^{\rm(in)}$ in \reff{eq:scheme X2 U1c} such that  $\al^{\rm(in)} \leq \al^{\rm(out)}$ (because the expressions are monotonically decreasing in $\al$).
%
On the other hand, the maximum rate $R_1$ in \reff{eq:scheme X2 U1c R1} for $\al^{\rm(in)} \leq \al^{\rm(out)}$ is always smaller than the outer bound for $R_1$ in \reff{eq:outer bound CIFC R1},
with equality only if $\al^{\rm(in)}=\al^{\rm(out)}$ and $\la=\la_{\rm Costa \ 1}$.
So to achieve both the sum rate outer bound and the $R_1$ rate outer bound we must have  $\al^{\rm(in)}=\al^{\rm(out)}$ and $\la=\la_{\rm Costa \ 1}$, which are the specific assignments considered  in Theorem \ref{thm:new capacity result}.
%
\end{proof}

\subsection{Further considerations on the utility of partial interference ``pre-cancellation''}

In spite of the result of Lemma~\ref{thm:no more capacity}, we next show that the largest
inner bound region with the scheme of Section \ref{sec:scheme E} is achieved
by $\la \not= \la_{\rm Costa \ 1}$.
Finding the $\la$ that minimizes the distance between the inner and
outer bounds is analytically too involved.
For this reason we consider the simpler problem of determining the value
of $\la$ that maximizes the sum rate. 

\begin{lem}
\label{thm:max sum rate}
When $|b|>1$ and the ``very strong interference'' condition is not satisfied,
setting $\la$ to a solutions of
\begin{align}
&-P_2 (|b|^2 H_2^{-1}-H_1^{-1}+\f{H_2^{-1}-H_1^{-1}}{\al P_1}) |\la|^2+ \nonumber\\
&2 \lb \sqrt{\alb P_1 P_2}(|b|^2 H_2^{-1}-H_1^{-1})+P_2 ( |b| H_2^{-1}-\Re\{a\} H_1^{-1} )\rb \Re\{\la\}  \nonumber\\
&\al P_1 (|b|^2 H_2^{-1}-H_1^{-1}) =0.
\label{eq:max sum rate}
\end{align}
for
\begin{align*}
H_1 &= \EE[|Y_1|^2] = 1 +|a|^2 P_2 + P_1 +2 \Re\{a\} \sqrt{\alb P_1 P_2}, \\
H_2 &= \EE[|Y_2|^2] = 1 + P_2 +|b|^2 P_1 +2 \sqrt{\alb |b|^2 P_1 P_2},
\end{align*}
yields the largest achievable sum rate in the scheme of Section \ref{sec:scheme E} for a given $\al\in[0,1]$.
\end{lem}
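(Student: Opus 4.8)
The plan is to fix $\al\in[0,1]$ and optimize the achievable sum rate of the scheme of Section~\ref{sec:scheme E} over the single free parameter $\la$. For a given $(\al,\la)$ the three constraints in the region allow a maximum sum rate $R_1+R_2=\log H_2+\min\{0,\,g(\la)\}$, where $g(\la)$ is the difference between the right-hand side of \eqref{eq:scheme X2 U1c R1} and the $\la$-dependent part of \eqref{eq:scheme X2 U1c R2}, and where the explicit bound \eqref{eq:scheme X2 U1c R1+R2} equals the constant $\log H_2$. Since the maximizer of $g$ always belongs to the set that maximizes $\min\{0,g\}$, it suffices to maximize $g(\la)$ over $\la\in\Cbb$; this removes the clipping and turns the task into an unconstrained smooth maximization. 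Here $g(\la)=f(h_1,1;\la)-f(h_2,1/|b|^2;\la)$ with $h_1=a+\sqrt{\alb P_1/P_2}$ and $h_2=1/|b|+\sqrt{\alb P_1/P_2}$.

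Next I would make the $\la$-dependence explicit. Writing $U_{1c}=X_{1c}+\la X_2$ and $Y=X_{1c}+hX_2+\sigma Z$, the dirty-paper quantity is $f(h,\sigma^2;\la)=\log\frac{\al P_1\,\EE[|Y|^2]}{\Delta(h,\sigma^2;\la)}$ with $\Delta(h,\sigma^2;\la)=\EE[|U_{1c}|^2]\,\EE[|Y|^2]-|\EE[U_{1c}Y^*]|^2$, so only $\Delta$ depends on $\la$ and $g(\la)=\mathrm{const}+\log\big(\Delta_2(\la)/\Delta_1(\la)\big)$, where $\Delta_1=\Delta(h_1,1;\la)$ and $\Delta_2=\Delta(h_2,1/|b|^2;\la)$. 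A direct computation gives the quadratic form $\Delta_i=P_2(\al P_1+\sigma_i^2)|\la|^2-2\al P_1P_2\Re\{\la h_i^*\}+\al P_1(|h_i|^2P_2+\sigma_i^2)$, with $\sigma_1^2=1$, $\sigma_2^2=1/|b|^2$. I would then insert the two identities $\al P_1+|h_1|^2P_2+1=H_1$ and $\al P_1+|h_2|^2P_2+1/|b|^2=H_2/|b|^2$, which is exactly what introduces the totals $H_1,H_2$ (hence the factors $H_1^{-1},H_2^{-1}$) into the final expression.

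The optimization is thus over a ratio of two positive quadratics. Setting the gradient to zero yields $\Delta_1\nabla\Delta_2=\Delta_2\nabla\Delta_1$, and since $\tfrac12\nabla\Delta_i=A_i(\la-\mu_i)$ with $A_i=P_2(\al P_1+\sigma_i^2)>0$ and $\mu_i=\la_{\rm Costa}(h_i,\sigma_i^2)$, the stationarity condition reads $\Delta_1 A_2(\la-\mu_2)=\Delta_2 A_1(\la-\mu_1)$. As both $\Delta_i A_j>0$, this forces $\la-\mu_1$ and $\la-\mu_2$ to be positive real multiples of one another, so every stationary point lies on the line through $\mu_1$ and $\mu_2$. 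Restricting to that line (a single real degree of freedom), the condition $\Delta_1\Delta_2'=\Delta_2\Delta_1'$ has its cubic terms cancel and collapses to a quadratic; collecting its $|\la|^2$, $\Re\{\la\}$, and constant coefficients and substituting the $H_1,H_2$ identities is what should produce precisely \eqref{eq:max sum rate}.

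The step I expect to be the main obstacle is twofold. First, the coefficient bookkeeping: one must match the three surviving real coefficients to the stated combinations $|b|^2H_2^{-1}-H_1^{-1}$, $\sqrt{\alb P_1P_2}(|b|^2H_2^{-1}-H_1^{-1})+P_2(|b|H_2^{-1}-\Re\{a\}H_1^{-1})$, and $\al P_1(|b|^2H_2^{-1}-H_1^{-1})$, which is routine but unforgiving, and one must track the phase of the optimal $\la$ so that the two real stationarity equations reduce to a single relation in $|\la|^2$ and $\Re\{\la\}$. Second, since the ratio $\Delta_2/\Delta_1$ generically has both a maximum and a minimum among its stationary points, one must certify that the relevant root of \eqref{eq:max sum rate} is a maximizer; it is here that the hypotheses $|b|>1$ and the failure of the ``very strong interference'' condition \eqref{eq:very strong interference condition} are used, fixing the sign of the leading coefficient $-P_2(|b|^2H_2^{-1}-H_1^{-1}+\cdots)$ and guaranteeing the maximizing point is attained at a finite $\la\neq\la_{\rm Costa \ 1}$, consistent with the convex/concave picture of $R_1$ and $R_2$ in Fig.~\ref{fig:MaricChannelAchievabilityV3}.
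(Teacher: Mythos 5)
There is a genuine gap, and it sits precisely at the step you deferred as ``routine but unforgiving'' coefficient bookkeeping: that step cannot succeed, because \reff{eq:max sum rate} is not the stationarity condition $\nabla g(\la)=0$. It is the algebraic rewriting of the \emph{equality} $g(\la)=0$, i.e., of \reff{eq:scheme X2 U1c R1}$+$\reff{eq:scheme X2 U1c R2}$=$\reff{eq:scheme X2 U1c R1+R2}, which in your notation reads $H_1^{-1}\Delta_1(\la)=|b|^2H_2^{-1}\Delta_2(\la)$. The paper's proof never differentiates anything: since \reff{eq:scheme X2 U1c R1+R2} equals the constant $\log H_2$ and is a $\la$-independent ceiling on the achievable sum rate, any $\la$ at which the two single-rate bounds add up exactly to that ceiling (i.e., any root of \reff{eq:max sum rate}) attains the largest possible sum rate; the hypotheses ($|b|>1$ and failure of very strong interference) give $H_1\le H_2$, which forces the three coefficients in \reff{eq:max sum rate} to have signs (negative, positive, positive), hence a positive discriminant and a real root after setting $\Im\{\la\}=0$. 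The hypotheses are thus used for \emph{existence of a real zero-crossing of $g$}, not to classify a critical point as a maximum.

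Your route --- maximize $g$ by solving $\Delta_1\nabla\Delta_2=\Delta_2\nabla\Delta_1$ --- is sound as far as it goes (an argmax of $g$ is indeed an argmax of $\min\{0,g\}$), but it terminates in the Wronskian-type quadratic $\Delta_1\Delta_2'-\Delta_2\Delta_1'=0$, which is a genuinely different equation from \reff{eq:max sum rate}. Concretely, for the paper's running example $P_1=P_2=6$, $b=\sqrt{2}$, $a=\sqrt{0.3}$, $\al=1/2$, one has $\Delta_1\approx 24\la^2-45.17\la+31.34$ and $\Delta_2\approx 21\la^2-50.91\la+37.5$; the roots of \reff{eq:max sum rate} are $\la\approx 0.549$ and $\la\approx-0.376$, whereas the roots of $\Delta_1\Delta_2'-\Delta_2\Delta_1'=0$ are $\la\approx 0.235$ and $\la\approx 1.536$. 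Indeed, at the roots of \reff{eq:max sum rate} one has $g=0$ while $\max_\la g>0$, so the solutions named in the lemma are typically \emph{not} maximizers of $g$ at all --- both kinds of points achieve the clipped maximum $\log H_2$, but they satisfy different equations. Consequently your argument, even if executed flawlessly, proves a true but different statement (``the stationary maximizer of $g$ attains the largest sum rate'') and cannot establish the lemma as stated, which asserts this for the solutions of \reff{eq:max sum rate}. The repair is to drop the optimization entirely: show that the ceiling \reff{eq:scheme X2 U1c R1+R2} is met with equality at any root of \reff{eq:max sum rate}, and use the sign analysis above to show such a real root exists under the stated hypotheses.
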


\begin{proof}
This result is established by observing that \reff{eq:scheme X2 U1c R1+R2} is
the maximum achievable sum rate for a fixed $\al$.  Then, the $\la$ that maximizes
the sum rate must satisfy
$\reff{eq:scheme X2 U1c R1}+\reff{eq:scheme X2 U1c R2} =
\reff{eq:scheme X2 U1c R1+R2}$, that is,
\begin{align}
&f\Big(a+\sqrt{\f {\alb P_1}{P_2}},  1; \ \la\Big)
=
f\Big(\frac{1}{|b|}+\sqrt{\f {\alb P_1}{P_2}},\frac{1}{|b|^2}; \ \la\Big).
\label{eq:la for max sumrate in ach}
\end{align}
After some algebra, we rewrite the condition in~\reff{eq:la for max sumrate in ach}
as in~\reff{eq:max sum rate}.


Notice that there exist real-valued solutions (i.e., $\la\in\RR$) for in~\reff{eq:max sum rate}.
Indeed, when the ``very strong interference'' condition is not verified, and $|b| \geq 1$,
we have $H_1 \leq H_2$ for all $\al\in[0,1]$, and thus it follows that:
the coefficient of $|\la|^2$ is negative,
the coefficient of $\Re \{\la\}$ is positive,
and the constant term is positive.
By choosing $\Im\{\la\}=0$,
the equation  \reff{eq:max sum rate} reduces to
the quadratic function in $\Re \{\la\}$,
which has positive definite determinant and thus has at least one real-valued
solution. 
\end{proof}

\subsection{Numerical results}
\label{sec:numerical}
For the numerical results in the following we restrict ourselves
to real-valued input/output G-CIFC so as to reduce the dimensionality
of the search space for the optimal parameter values.

Although choosing $\la$ according to Lemma \ref{thm:max sum rate} does not guarantee
achieving the largest inner bound, we next show by numerical evaluations
that significant rate improvements can be obtained
when compared to choosing $\la=\la_{\rm Costa \ 1}$.
Fig.~\ref{fig:MaricChannelAchievabilityV4} shows the position of the point
\[
D(\la)=\Big(\reff{eq:scheme X2 U1c R1}, \min \{\reff{eq:scheme X2 U1c R2}, \reff{eq:scheme X2 U1c R1+R2}-\reff{eq:scheme X2 U1c R1}\}\Big)
\]
in the range $\la \in [0,2 \la_{\rm Costa \ 1}]$, for a fixed $\al^{\rm(in)}$,
together the outer bound point $C$ for $\al^{\rm(out)}=\al^{\rm(in)}$.
%
Under the ``primary decodes cognitive'' condition, $D(\la_{\rm Costa \ 1})=C$
for every $\al\in[0,1]$.
However, here we show a channel where the condition in \reff{eq:capacity condition 1} is not satisfied.
In this case the choice $\la=\la_{\rm Costa \ 1}$  minimizes the distance of the $R_1$-coordinate between $D$ and $C$, but it does not minimize the Euclidean distance between the two points.
The choice of $\la$ as in Lemma \ref{thm:max sum rate} minimizes the distance between the sum rate inner and outer bounds.

\begin{figure}
\centering
\includegraphics[width=9 cm ]{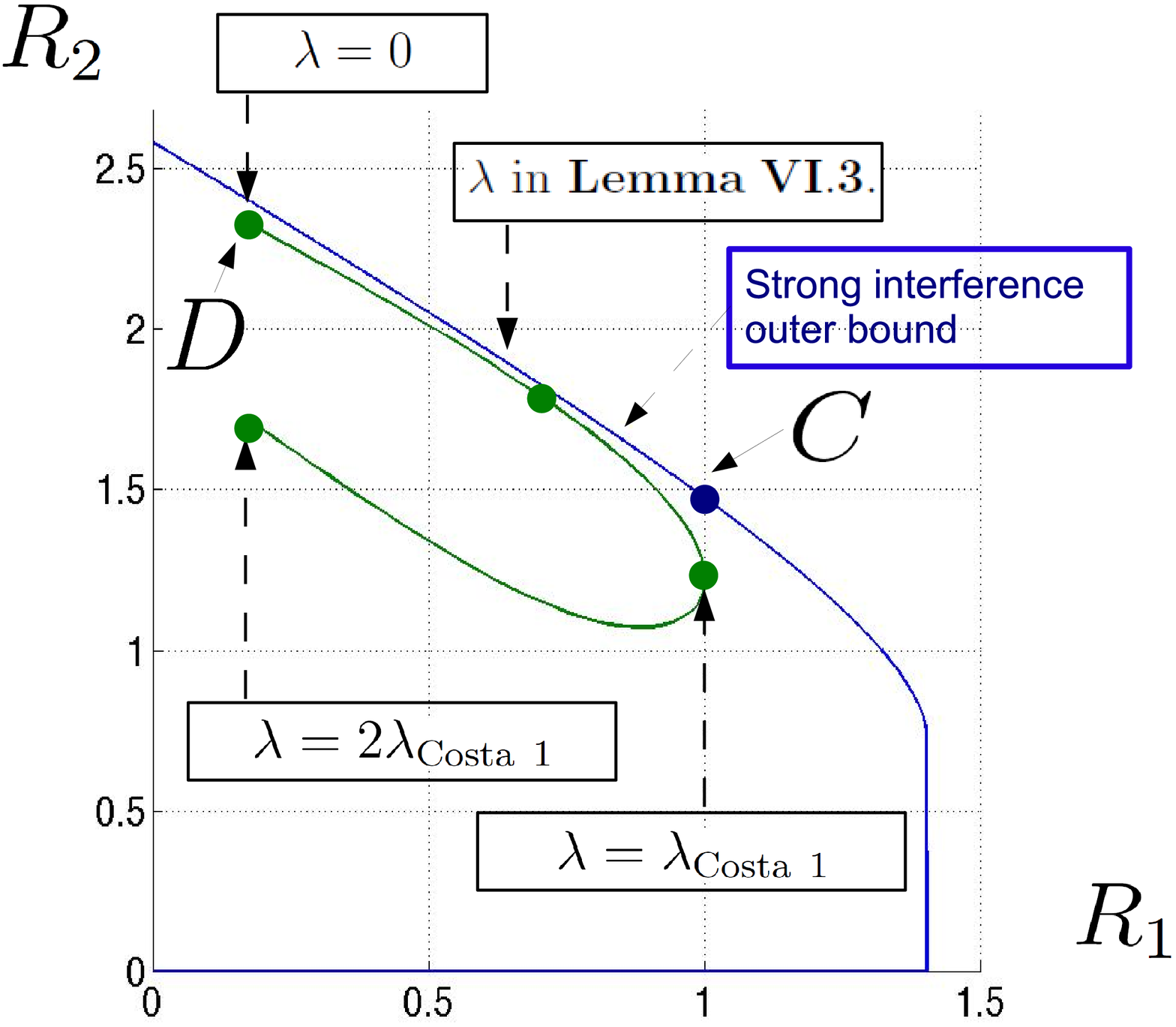}
  \vspace{-1. cm}
\caption{A plot of the points $C(\la)$ and $D(\la)$ for $\al=0.5$,
together with the ``strong interference'' bound for the
real-valued G-CIFC with parameters $P_1=P_2=6, b=\sqrt{2}$ and  $a=\sqrt{0.3}$.}\label{fig:MaricChannelAchievabilityV4}
\end{figure}

The following is another example to show that partial interference ``pre-cancellation''
($\la\not=\la_{\rm Costa \ 1}$) yields an achievable region larger than with
perfect interference ``pre-cancellation'' ($\la=\la_{\rm Costa \ 1}$).
Fig.~\ref{fig:SumOptimal} illustrates the  achievable region obtained with the choice of $\la$ as in Lemma \ref{thm:max sum rate}.  For this choice of parameters, the region obtained by considering any  $\la \in [0,2 \la_{\rm Costa \ 1}]$ coincides with the region obtained by choosing $\la$ as in Lemma \ref{thm:max sum rate}.
The achievable rate region for $\la=\la_{\rm Costa \ 1 }$ is also provided for reference.
Although numerical evaluations show that this is not the case in general, it is interesting to note that this choice  of $\la$ has the same achievable rate region as the one obtained using any (optimal) values of $\la$. 

\begin{figure}
\centering
\includegraphics[width=9.0 cm ]{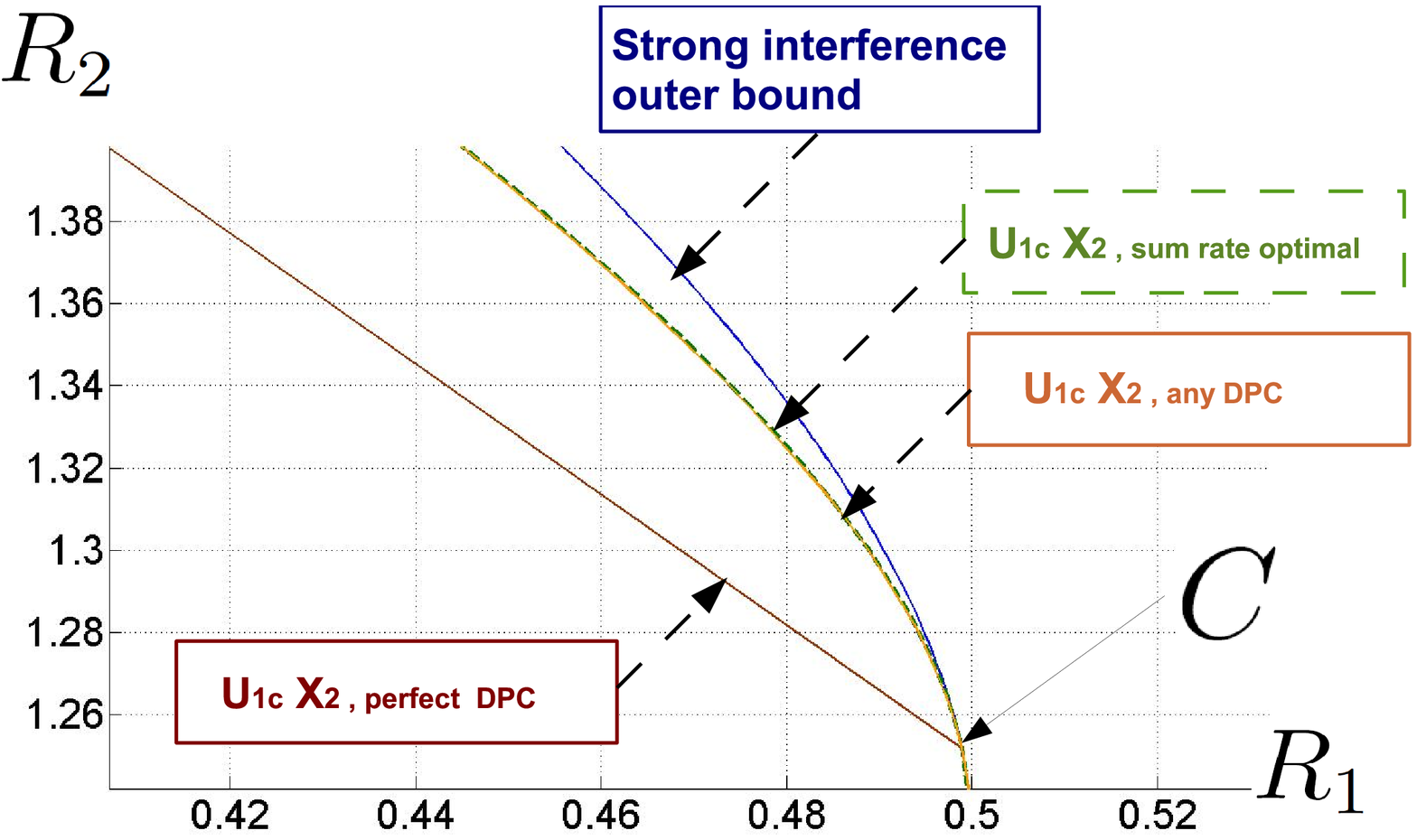}
\vspace{-3 cm}
\caption{A portion of the achievable region in~\reff{eq:scheme X2 U1c} for
$\la=\la_{\rm Costa}$ (labeled as ``perfect DPC''),
the optimal $\la \in [0,2 \la_{\rm Costa}]$ (labeled as ``any DPC''),
and $\la$ chosen as  in Lemma \ref{thm:max sum rate} (labeled as ``sum rate optimal'')
for the real-valued G-CIFC with parameters
$P_1=P_2=6$, $b=3$  and $a=2$.}
\vspace{-.5 cm}
\label{fig:SumOptimal}
\end{figure}


\textbf{Acknowledgment} The authors would like to thank Prof. Shlomo Shamai for insightful discussions on the problem during IZS 2010 (International Zurich Seminar on Communications, March 3-5, 2010, Zurich, Switzerland.)

\section{Conclusion and Future Work}
\label{sec:Conclusion and Future Work}

In this paper we presented a new capacity result for the Gaussian cognitive interference channel in a subset of the channel parameter space which we term the ``primary decodes cognitive" regime.
We derived an additive and a multiplicative approximate capacity result that provides important insight on the fundamental features of the capacity achieving scheme.
In particular, we show how pre-coding against the  interference for one user can be used to boost the rate of the other user.
Numerical results further verify that significant  rate improvements can be obtained.
%
While this result extends the parameter regimes in which capacity for the Gaussian cognitive interference channel is known, it still remains unknown in general.
The achievable region of \cite{rini2009state} provides a comprehensive inner bound that can potentially yield new capacity results.
Only some specific choices of parameters for this region have been considered so far and we thus expect that additional results may be derived from this region.
On the other hand, tighter outer bounds for the capacity region are also probably necessary.
When the Gaussian cognitive interference channel reduces to a broadcast channel, the outer bound is not tight, suggesting  that the outer bound may be loose in an even larger region.

%

\bibliographystyle{IEEEtran}
\bibliography{steBib1}

\end{document}